%% file: ieee_standalone.tex
\documentclass[10pt,twocolumn,twoside]{ieee/IEEEtran}
\usepackage{xcolor,soul,framed} 

\colorlet{shadecolor}{yellow}
\usepackage[pdftex]{graphicx}
\graphicspath{{figures/}}
\DeclareGraphicsExtensions{.eps,.pdf,.jpeg,.png}

\usepackage[cmex10]{amsmath}
\usepackage{array}
\usepackage{mdwmath}
\usepackage{mdwtab}
\usepackage{eqparbox}
\usepackage{url}
\usepackage{amssymb}
\usepackage{amsthm}
\usepackage{amsfonts}
\usepackage{hyperref}
\usepackage{xurl}
\usepackage{orcidlink}

\newtheorem{theorem}{Theorem}[section]
\newtheorem{lemma}[theorem]{Lemma}
\newtheorem{corollary}[theorem]{Corollary}
\newtheorem{definition}[theorem]{Definition}

\newcommand{\cfg}[1]{\hyperref[tab:configurations]{Config.~#1}}

\newif\ifnotIEEE
\notIEEEfalse
\newif\ifwithConfergenceAnalysis
\withConfergenceAnalysisfalse

\begin{document}
    \title{Distributed Model Predictive Control with Adaptive Safety Zones for Multi-Fleet Drone Operations}
    \author{
        \IEEEauthorblockN{Linda M\"{u}mken\,\orcidlink{0009-0005-2646-9397}\IEEEauthorrefmark{1},
            Diyar Altinses,\orcidlink{0009-0005-7928-5874}\IEEEauthorrefmark{1},
            Michael Schwung,\orcidlink{0000-0002-8420-737X}\IEEEauthorrefmark{2},
            Stefan Lier,\orcidlink{0000-0002-3314-7610}\IEEEauthorrefmark{3},
            Andreas Schwung,\orcidlink{0000-0001-8405-0977}\IEEEauthorrefmark{1}\\}
        \IEEEauthorblockA{\IEEEauthorrefmark{1}Department of Automation Technology
            South Westphalia University of Applied Sciences,
            Soest, Germany\\
            \{muemken.linda, altinses.diyar, schwung.andreas\}@fh-swf.de\\}
        \IEEEauthorblockA{\IEEEauthorrefmark{2}Institute of Automation and Computer Control,
            Ruhr University Bochum, 44801 Bochum, Germany 
            Michael.schwung@rub.de\\}
        \IEEEauthorblockA{\IEEEauthorrefmark{3}Department of Logistics and Supply Chain Management,
            South Westphalia University of Applied Sciences, Meschede, Germany 
            lier.stefan@fh-swf.de}}



\maketitle

    \begin{abstract}
        The increasing deployment of autonomous drones in restricted airspaces demands safe, collision-free operations at maximum vehicle density. We address two key challenges: determining the geometric capacity limits of a given airspace volume and ensuring stable, collision-free trajectory optimization up to those limits. Our approach employs model predictive control (MPC) with adaptive, speed-dependent safety zones. Each drone's safety radius scales with its braking distance. It is tight at low speeds and expanded at high speeds. We prove that this adaptive scheme achieves feasibility up to the geometric packing limit evaluated at the minimum radius, maximizing airspace utilization where fixed-radius methods fail. Within this framework, we develop and compare centralized and distributed MPC formulations. Centralized MPC provides faster convergence with full state information, while distributed MPC (DMPC) enables each drone to optimize locally based on detected neighbors, accommodating mixed fleets with non-cooperative agents. We establish Lyapunov stability under sufficient conditions on the adaptation parameter, drone density, and prediction horizon, and derive modified geometric capacity bounds showing that sphere-packing density limits remain valid at the minimum radius. Simulations confirm that the adaptive framework achieves feasibility in scenarios where fixed-radius approaches become infeasible.
    \end{abstract}
    \input{extras/ntp}

    \begin{IEEEkeywords}
        uav, model predictive control (mpc), collision avoidance, sphere packing theory, airspace capacity, safety zones, geometric constraints
    \end{IEEEkeywords}

    \IEEEpeerreviewmaketitle


    \section{Introduction}

    \IEEEPARstart{A}{utonomous} drone swarms are increasingly deployed in space-constrained environments such as warehouses, infrastructure inspections, and urban delivery corridors~\cite{faa_utm_conops,kopardekar2016utm,schwung_drone_logistics}, where multiple drones must share limited airspace safely and efficiently~\cite{yasin2020collision,huang2019collision}. A central challenge is determining how many drones a bounded volume can accommodate while guaranteeing collision-free operation~\cite{bauranov2023capacity,doole2020estimation}.

    Existing approaches~\cite{drones10020139} rely on fixed safety zones, which must account for the worst-case velocity and therefore wastes airspace in congested scenarios. A hovering drone requires far less separation than one traveling at maximum speed, yet a constant radius cannot exploit this. Furthermore, centralized coordination is often unavailable due to the absence of shared communication infrastructure~\cite{kuwata2007distributed,icao_utm_framework}, and each drone must rely on broadcast neighbor states subject to delays and limited update rates. Model predictive control (MPC) is well-suited to trajectory optimization under safety and kinematic constraints~\cite{MAYNE2000789,BORRELLi}, but replacing fixed radii with velocity-dependent ones introduces new analytical challenges: the optimizer may steer drones toward configurations that become infeasible upon subsequent acceleration. Establishing feasibility and stability of MPC under such conditions is therefore non-trivial.

    We address these limitations by replacing the fixed radius with an adaptive, velocity-proportional safety sphere that reflects braking distance, allowing slower drones to maintain smaller safety zones and pass through bottlenecks that fixed-radius methods block. Based on these sphere types, we develop both a centralized MPC that optimizes all trajectories jointly and a distributed MPC in which each drone solves a local problem using kinematic information from neighboring drones~\cite{shorinwa2023dmpc,luis2020dmpc_drones,GRAFE202279}, treating all agents uniformly regardless of fleet membership. To make the distributed problem tractable, all drones broadcast their kinematic state; sensing uncertainty can be incorporated by inflating the adaptive radii for estimation errors. Within this framework, we directly connect feasibility to airspace utilization. This means that if the MPC problem remains feasible for a given drone configuration, the configuration can operate safely up to the geometric packing limit. Building on~\cite{drones10020139}, we show that the capacity bounds remain valid and reformulate jamming conditions for velocity-dependent safety zones.

    The principal contributions of this work are as follows:
    \begin{itemize}
        \item We develop a distributed MPC framework with adaptive, velocity-dependent safety zones for mixed-traffic airspace, treating all nearby agents uniformly without requiring coordination or fleet membership information.
        \item We prove feasibility and Lyapunov stability for the centralized MPC and extend both to the distributed setting. The extension quantifies the coupling error from distributed iteration and establishes a contraction condition that preserves the centralized stability margins.
        \item We show that the fundamental capacity limits from~\cite{drones10020139} remain valid under the adaptive formulation, deriving explicit bounds on the critical drone density and the minimum prediction horizon required for feasibility.
        \item We validate the framework in simulation, comparing centralized and distributed MPC with static and adaptive spheres. The results demonstrate feasibility where fixed-radius methods fail and quantify computational trade-offs.
    \end{itemize}

    The remainder of this paper is organized as follows. Section~\ref{sec:related_work} reviews related work on distributed control, safety zones, and airspace capacity. Section~\ref{sec:preliminaries} introduces the adaptive safety radius, agent dynamics, and communication topology, followed by the centralized and distributed MPC formulations in Section~\ref{sec:framework}. The theoretical analysis in Section~\ref{sec:theory} develops throughput bounds for narrow passages, feasibility and stability proofs, and geometric capacity limits for the adaptive formulation. Section~\ref{sec:evaluation} validates the framework through simulation and illustrates the trade-offs between centralized and distributed approaches, before Section~\ref{sec:conclusion} concludes with limitations and future directions.


    \section{Related Work}\label{sec:related_work}

    This section reviews relevant literature from three domains: collision avoidance in urban airspace, safety zone formulations, and distributed model predictive control.

    \subsection{Collision Avoidance in Urban Drone Operations}

    Urban environments combine severe spatial constraints with high traffic density, where buildings, restricted zones, and dynamic obstacles create confined corridors~\cite{bauranov2023capacity, doole2020estimation}. Unlike commercial aircraft, drones must navigate these complex three-dimensional environments in real time~\cite{yasin2020collision, huang2019collision}. MPC-based path planning approaches address this by decomposing the problem into global planning for static environments and local collision resolution for dynamic threats~\cite{baca2018mpc, lindqvist2020nmpc, dentler2016realtime, olcay2024dynamic}, with multi-drone coordination remaining an active research area~\cite{ramezani2023uav}. Detect-and-avoid systems have further enabled beyond-visual-line-of-sight operations~\cite{mohajerin2023sensors}, though swarm scenarios intensify the tracking challenge significantly~\cite{doukhi2025sim}. Regulatory frameworks now mandate specific separation criteria for drone operations in controlled airspace~\cite{faa_utm_conops, icao_utm_framework, kopardekar2016utm, bauranov2023capacity}.

    Unlike these approaches, which focus on real-time avoidance or regulatory compliance, this work provides formal geometric capacity guarantees. We derive packing bounds and feasibility proofs to determine the maximum number of drones a given airspace can accommodate, analyzing the regime near the packing boundary where adding a single drone can make the MPC problem infeasible.

    \subsection{Safety Zones and Trajectory Planning}

    Collision avoidance is typically formalized using spherical safety zones with a fixed radius around each agent~\cite{zhang2018survey}, which enables capacity analysis via sphere-packing theory~\cite{hales2005proof, conway1999sphere, RevModPhys.82.2633} and simplifies optimization-based constraint formulation. The scoring algorithm framework in~\cite{schwung_drone_logistics, drones10020139} classifies drone logistics decisions into static, dynamic, and semi-static categories. Safety zone sizing falls into the semi-static category, where the initial radius depends on drone characteristics but can adapt to environmental conditions. However, fixed radii ignore the fact that stopping distance scales quadratically with velocity~\cite{fiorini1998velocity, vandenberg2011rvo}, which motivates velocity-dependent formulations that adjust the safety zone according to the instantaneous kinematic state~\cite{wang2025velocityapf, sun2023adaptive}. For trajectory planning, Bézier curves offer smooth, efficiently computable paths with inherent continuity guarantees~\cite{schwung_bezier_trajectory}. In hierarchical schemes, stand-on agents move freely, and give-way agents plan around communicated trajectories. This introduces uncertainty when stand-on agents deviate~\cite{richter2016polynomial}.

    Rather than fixing the safety radius, this work extends the geometric packing bounds of~\cite{drones10020139} by replacing the constant radius with an adaptive formulation tied to physical braking distance.

    \subsection{Distributed Model Predictive Control}

    In shared airspace, drones from different operators must coexist without a common coordination framework~\cite{faa_utm_conops, icao_utm_framework}. Distributed MPC addresses this by enabling each agent to solve local optimization problems based on observations of nearby drones~\cite{kuwata2007distributed, tallamraju2018decentralized}, with successful applications ranging from event-triggered replanning~\cite{GRAFE202279} and online trajectory generation~\cite{luis2020dmpc_drones} to cooperative formation control~\cite{drones9050366}. Two fundamental update paradigms exist: Jacobi-type schemes, where all agents optimize simultaneously using previous-iteration data~\cite{stomberg2022dmpc_compendium}, and Gauss--Seidel-type schemes, where agents optimize sequentially using the most recent solutions from predecessors~\cite{koehler2022sequential_dmpc}. The Gauss--Seidel approach naturally aligns with stand-on/give-way protocols for collision avoidance~\cite{schwung_bezier_trajectory} and handles external non-cooperative agents by treating their trajectories as exogenous constraints~\cite{yoshikawa2023chance, mesbah2016stochastic}. Stability in such mixed environments can be ensured via control barrier functions or chance constraints~\cite{ames2017cbf, zeng2021mpccbf, thirugnanam2024gcbf}.

    Although adaptive safety zones~\cite{sun2023adaptive}, distributed MPC~\cite{luis2020dmpc_drones, shorinwa2023dmpc}, and sphere packing theory~\cite{hales2005proof, conway1999sphere} are well studied individually, no prior work combines all three within a framework supporting heterogeneous, non-cooperative agents.
    This work provides exactly this combination: sphere packing bounds determine the geometric capacity limit at which the MPC problem becomes infeasible, while adaptive safety zones and distributed coordination maximize utilization up to that limit, with formal feasibility and stability guarantees for both centralized and distributed formulations.


    \section{Preliminaries}\label{sec:preliminaries}

    This section defines the adaptive safety radius, the drone dynamics, and the communication topology underlying the MPC formulations in Section~\ref{sec:framework}.

    \subsection{Adaptive Safety Radius Definition}

    We derive the adaptive safety zone directly from kinematic considerations. During maximum deceleration, the safety zone must be large enough to bring the drone to a complete stop from its current velocity.

    \begin{definition}[Adaptive Safety Radius]\label{def:adaptive_radius}
        Each drone $i$ at time $t$ occupies a safety zone represented by a closed sphere $B_{r_i(t)}(\mathbf{p}_i)$. The adaptive radius is defined by
        \begin{equation}
            r_i(t) = r_i(\mathbf{v}_i(t)) = r_{\min} + \alpha \cdot s_{\text{st}}(\mathbf{v}_i(t)),
        \end{equation}
        where $r_{\min} > 0$ is the base radius accounting for the physical dimensions of the drone and sensor uncertainties, $\alpha \in (0, 1]$ is a scaling parameter that modulates the influence of velocity in the safety zone and $s_{\text{st}}(\mathbf{v}_i(t))$ is the stopping distance. We write $r_i(t)$ as shorthand for the composite map $r_i(\mathbf{v}_i(t))$ whenever the velocity dependence is clear from context.
    \end{definition}

    Intuitively, faster drones need larger safety zones due to longer braking distances. The parameter $\alpha$ trades off conservative safety ($\alpha$ close to 1) against more aggressive space utilization ($\alpha$ close to 0).

    The pairwise safety constraint becomes time-varying and state-dependent, requiring that
    \begin{equation}\label{eq:pairwise_separation}
        \|\mathbf{p}_i(t) - \mathbf{p}_j(t)\| \geq r_i(t) + r_j(t), \quad \forall i \neq j, \forall t.
    \end{equation}

    \subsection{Agent Model and Communication Topology}\label{subsec:agent_model}

    To integrate the adaptive safety radius into an optimization-based control scheme, we formalize the drone dynamics and the communication structure regulating information exchange among agents.

    \subsubsection*{Modeling Framework}

    Let $\Omega \subset \mathbb{R}^3$ be a bounded, connected, Lipschitz domain representing the airspace. There are $N$ identical drones indexed by $i \in \{1, \dots, N\}$. Each drone has a goal $\bar{\mathbf{p}}_i \in \Omega$, occupies a safety zone represented by the closed sphere $B_r(\mathbf{p}_i)$ of radius $r$ centered at $\mathbf{p}_i$, and is subject to boundary conditions $B_r(\mathbf{p}_i(t)) \subset \Omega$ for all $i, t$. Safety requires pairwise separation $\|\mathbf{p}_i(t) - \mathbf{p}_j(t)\| \geq 2r$ for all $i \neq j$ and for all $t$.

    In a physical quadrotor, translational acceleration is generated indirectly. Rotor speeds $\omega_k$ produce a total thrust $T = c_T \sum_k \omega_k^2$, where $c_T$ is a rotor-specific thrust coefficient, as well as body torques~$\boldsymbol{\tau}$. A cascaded control architecture~\cite{schwung_bezier_trajectory} uses a fast inner-loop attitude controller whose bandwidth far exceeds the MPC sampling rate. Under standard timescale separation~\cite[Chapter~11]{khalil_nonlinear_2002}, the translational dynamics reduce to the double-integrator model

    \begin{align}
        \dot{\mathbf{p}}_i &= \mathbf{v}_i, \\
        \dot{\mathbf{v}}_i &= \mathbf{u}_i,
    \end{align}
    with state $\mathbf{x}_i = (\mathbf{p}_i, \mathbf{v}_i) \in \mathbb{R}^3 \times \mathbb{R}^3$, subject to hard control and speed bounds $\|\mathbf{u}_i(t)\| \leq U_{\max}$ and $\|\mathbf{v}_i(t)\| \leq V_{\max}$. A single scalar parameter $U_{\max} = T_{\max}/m - g$ encapsulates the physical propulsion limits, rotor saturation, motor dynamics, and maximum tilt angle. The feasibility and stability results derived in Section~\ref{sec:theory} under this abstraction can be shown to hold for the full nonlinear rigid-body quadrotor model as well, since the inner-loop controller reduces the translational dynamics to a perturbed double integrator with bounded disturbance that is absorbed by the same Lyapunov mechanism used in the stability proof.

    \subsubsection*{Neighborhood Graph}

    The neighbor set $\mathcal{N}_i$ of drone $i$ determines which agents are considered in the local collision avoidance constraints. Two natural choices are the following.
    \begin{itemize}
        \item \textit{Communication-radius coupling:} $\mathcal{N}_i = \{j : \|\mathbf{p}_i - \mathbf{p}_j\| \leq r_{\text{com}}\}$ for a given communication radius $r_{\text{com}} > 0$. This reflects practical communication range limitations.
        \item \textit{Full connectivity:} $\mathcal{N}_i = \{1, \ldots, N\} \setminus \{i\}$. This recovers the collision avoidance scope of the centralized formulation.
    \end{itemize}
    We assume an ideal network in which each drone receives its neighbors' positions at every MPC step and uses their predicted trajectories for local optimization. Safety is preserved as long as $r_{\text{com}}$ is large enough to include all drones whose safety zones could potentially overlap.


    \section{Problem Formulation}\label{sec:framework}

    Given the adaptive safety radius (Definition~\ref{def:adaptive_radius}), the double-integrator dynamics (Section~\ref{subsec:agent_model}), and the pairwise separation constraint~\eqref{eq:pairwise_separation}, the central problem of this work can be stated as follows.

    \noindent\textbf{Problem.}
    Consider $N$ drones in a bounded airspace $\Omega \subset \mathbb{R}^3$, each subject to acceleration and velocity bounds $U_{\max}$ and $V_{\max}$, with velocity-dependent safety radii $r_i(\mathbf{v}_i)$ per Definition~\ref{def:adaptive_radius}. Design a control law $\mathbf{u}_i(t)$ for each drone such that:
    \begin{enumerate}
        \item[\textbf{(F)}] \emph{Feasibility:} The pairwise separation $\|\mathbf{p}_i(t) - \mathbf{p}_j(t)\| \geq r_i(t) + r_j(t)$ and the airspace constraint $B_{r_i(t)}(\mathbf{p}_i(t)) \subset \Omega$ hold for all $i \neq j$ and $t \geq 0$.
        \item[\textbf{(S)}] \emph{Stability:} Each drone converges to its goal, $\|\mathbf{p}_i(t) - \bar{\mathbf{p}}_i\| \to 0$ as $t \to \infty$.
        \item[\textbf{(D)}] \emph{Distributed operation:} Without centralized coordination, drone $i$'s control law depends only on its own state and the communicated states of neighbors $j \in \mathcal{N}_i$.
        \item[\textbf{(C)}] \emph{Capacity:} Determine the maximum number $N_{\mathrm{crit}}$ for which \textbf{(F)} and \textbf{(S)} remain jointly solvable in $\Omega$.
    \end{enumerate}

    We address this problem through MPC. The speed dependence of the safety radius allows drones to shrink their safety zones at low speeds, enabling passage through bottlenecks that fixed-radius approaches block. Safety is maintained throughout, since the reduced radius remains proportional to the actual braking distance. We first present the centralized formulation, then reformulate it as a distributed MPC.

    \subsection{Centralized MPC}\label{subsec:centralized_mpc}
    With sampling time $\Delta t > 0$ and horizon $H \in \mathbb{N}$, let $s \triangleq k + h$ for $h \in \{0, \ldots, H-1\}$.
    At each decision time $k\Delta t$, the centralized MPC problem is formalized as:

    \begin{equation}\label{eq:centralized_mpc}
        \begin{aligned}
            \min_{\{\mathbf{u}_i(s)\}} &\sum_{h=0}^{H-1} \sum_{i=1}^{N} \left[\ell(\mathbf{x}_i(s), \mathbf{u}_i(s)) + \lambda\|\mathbf{v}_i(s)\|^2\right]\\
            \text{s.t.} \quad &\mathbf{p}_i(s+1) = \mathbf{p}_i(s) + \Delta t \cdot \mathbf{v}_i(s) + \tfrac{1}{2}\Delta t^2 \cdot \mathbf{u}_i(s),\\
            &\mathbf{v}_i(s+1) = \mathbf{v}_i(s) + \Delta t \cdot \mathbf{u}_i(s),\\
            &\underbrace{\|\mathbf{u}_i(s)\| \leq U_{\max}}_{\text{Acceleration limits}} , \quad \underbrace{\|\mathbf{v}_i(s)\| \leq V_{\max}}_{\text{Velocity limits}},\\
            &\underbrace{\|\mathbf{p}_i(s) - \mathbf{p}_j(s)\| \geq 2r_{\min} + \alpha \cdot (\tfrac{\|\mathbf{v}_i(s)\|^2}{2U_{\max}} + \tfrac{\|\mathbf{v}_j(s)\|^2}{2U_{\max}})}_{\text{Inter-drone separation}},\\
            &\underbrace{B_{r_i(s)}(\mathbf{p}_i(s)) \subset \Omega}_{\text{Airspace boundary}}.
        \end{aligned}
    \end{equation}
    The stage cost $\ell(\mathbf{x}_i, \mathbf{u}_i)$ penalizes goal deviations and control effort, and $\lambda > 0$ is a velocity penalty weight that incentivizes speed reduction, thereby shrinking safety zones in congested regions. Dynamics follow the discretized double integrator from Section~\ref{subsec:agent_model}. Collision safety requires inter-drone distances to exceed the sum of the adaptive safety radii from Definition~\ref{def:adaptive_radius}, and each safety zone must lie within $\Omega$.

    \subsection{Distributed MPC Formulation}\label{subsubsec:distributed_mpc}

    Centralized coordination is often infeasible: mixed fleets lack shared communication infrastructure, and privacy constraints may prevent full state disclosure. We therefore reformulate~\eqref{eq:centralized_mpc} as a distributed MPC, where each drone solves a local problem using only the estimated neighbor trajectories.

    At each decision time $k\Delta t$, drone $i$ solves its own MPC treating predicted neighbor trajectories $\hat{\mathbf{p}}_j(\cdot)$, $j \in \mathcal{N}_i$, as fixed:
    \begin{equation}\label{eq:distributed_mpc}
        \begin{aligned}
            \min_{\{\mathbf{u}_i(s)\}} \quad &\sum_{h=0}^{H-1} \left[\ell(\mathbf{x}_i(s), \mathbf{u}_i(s)) + \lambda\|\mathbf{v}_i(s)\|^2\right]\\
            \text{s.t.} \quad &\mathbf{p}_i(s+1) = \mathbf{p}_i(s) + \Delta t \cdot \mathbf{v}_i(s) + \tfrac{1}{2}\Delta t^2 \cdot \mathbf{u}_i(s),\\
            &\mathbf{v}_i(s+1) = \mathbf{v}_i(s) + \Delta t \cdot \mathbf{u}_i(s),\\
            &\underbrace{\|\mathbf{u}_i(s)\| \leq U_{\max}}_{\text{Acceleration limits}} , \quad \underbrace{\|\mathbf{v}_i(s)\| \leq V_{\max}}_{\text{Velocity limits}},\\
            &\underbrace{\|\mathbf{p}_i(s) - \hat{\mathbf{p}}_j(s)\| \geq r_i(\mathbf{v}_i(s)) + r_j(\hat{\mathbf{v}}_j(s)), \,\, \forall j \in \mathcal{N}_i}_{\text{Inter-drone separation with predicted neighbor trajectories}},\\
            &\underbrace{B_{r_i(s)}(\mathbf{p}_i(s)) \subset \Omega}_{\text{Airspace boundary}}.
        \end{aligned}
    \end{equation}

    Compared to~\eqref{eq:centralized_mpc}, the cost applies to a single agent, and the separation constraint replaces jointly optimized positions $\mathbf{p}_j$ with predicted trajectories $\hat{\mathbf{p}}_j$, where $r_i(\mathbf{v}_i) = r_{\min} + \alpha \frac{\|\mathbf{v}_i\|^2}{2U_{\max}}$ is the adaptive radius from Definition~\ref{def:adaptive_radius}. All other constraints are unchanged.

    \subsection{Iterative Solution Scheme in Distributed MPC}
    Since each drone solves its local MPC with neighbor trajectories fixed, the coupled problem has no closed-form solution. Instead, it is solved via asynchronous ADMM--Gauss--Seidel (ADMM-GS) iteration, combining ADMM decomposition with Gauss--Seidel sequential updates~\cite{stomberg2022dmpc_compendium, bertsekas1989parallel}. Let $\sigma^{(m)}$ denote a random permutation of $\{1, \ldots, N\}$ determining the processing order in iteration $m$. Each drone publishes its updated trajectory immediately upon completing its local solve; the next drone uses the most recent available trajectories:
    \begin{equation}\label{eq:gauss-seidel}
        \mathbf{u}_i^{(m+1)} = \arg\min_{\mathbf{u}_i} J_i\!\left(\mathbf{u}_i,\, \left\{\hat{\mathbf{p}}_j^{\text{latest}}\right\}_{j \in \mathcal{N}_i}\right),
    \end{equation}
    where $\hat{\mathbf{p}}_j^{\text{latest}}$ is the most recently published trajectory of neighbor $j$ — from iteration $m+1$ if $j$ has already been solved in the current round, otherwise from iteration $m$. Randomized ordering prevents systematic biases from a fixed sequence and serves as a symmetry-breaking mechanism. The iteration terminates when
    \begin{equation}
        \max_{i} \left\|\hat{\mathbf{p}}_i^{(m+1)} - \hat{\mathbf{p}}_i^{(m)}\right\| < \varepsilon
    \end{equation}
    for a tolerance $\varepsilon > 0$, or upon reaching a maximum iteration count.

    Having formalized the problem and both MPC formulations, we now establish feasibility and stability of the adaptive framework.
    The subsequent analysis relies equally on sphere packing theory and on MPC.
    Sphere packing theory provides the geometric capacity bounds that determine how many drones with adaptive safety radii can coexist within $\Omega$, while MPC enforces these bounds as trajectory constraints over a finite prediction horizon.
    Neither component alone is sufficient: the packing bounds define the feasibility limit, and MPC provides the control mechanism to operate safely up to that limit.


    \section{Stability and Feasibility Analysis}\label{sec:theory}
    We first derive the throughput-optimal crossing speed for narrow passages (Section~\ref{subsec:throughput}), which yields the critical drone density bound used to establish MPC feasibility (Section~\ref{subsec:feasibility}). Section~\ref{subsec:stability} then proves asymptotic stability of the centralized adaptive MPC via Lyapunov theory, and Section~\ref{subsec:distributed} extends both guarantees to the distributed setting.

    \subsection{Throughput Analysis for Narrow Passages}\label{subsec:throughput}

    Consider a planar cross-section $S \subset \Omega$ with area $|S|$ traversed at uniform speed~$v$.

    \begin{theorem}[Optimal Adaptive Throughput]\label{thm:adaptive_throughput}
        Let $r(v) = r_{\min} + \frac{\alpha v^2}{2U_{\max}}$ be the adaptive safety radius. In addition, let
        \begin{equation}\label{eq:geometric_capacity}
            n_S(v) = \left\lfloor \frac{|S|}{\pi\, r(v)^2} \right\rfloor
        \end{equation}
        be the maximum number of non-overlapping safety disks that fit on~$S$ (spatial capacity) and
        \begin{equation}\label{eq:temporal_separation}
            g(v) = \frac{2\,r(v)}{v}
        \end{equation}
        the minimum headway between successive drones in the same lane (temporal separation). Then the surface throughput $\Phi = n_S / g$ satisfies
        \begin{equation}\label{eq:throughput_adaptive}
            \Phi_{\text{a}}(v) = \frac{|S| \cdot v}{2\pi\left(r_{\min} + \frac{\alpha v^2}{2U_{\max}}\right)^3},
        \end{equation}
        which is maximized at the optimal crossing speed
        \begin{equation}\label{eq:v_opt_adaptive}
            v_{\text{a}}^* = \sqrt{\frac{2\,r_{\min}\,U_{\max}}{5\alpha}},
        \end{equation}
        provided $v_{\text{a}}^* \leq V_{\max}$. At this speed, the adaptive radius takes the value
        \begin{equation}\label{eq:r_opt}
            r(v_{\text{a}}^*) = \frac{6}{5}\,r_{\min},
        \end{equation}
        which is independent of $\alpha$, $U_{\max}$, and $V_{\max}$, and the maximum adaptive throughput is
        \begin{equation}\label{eq:phi_adaptive_max}
            \Phi_{\text{a}}^{\max} = \frac{125\,|S|}{432\,\pi\,r_{\min}^3} \cdot \sqrt{\frac{2\,r_{\min}\,U_{\max}}{5\alpha}}.
        \end{equation}
    \end{theorem}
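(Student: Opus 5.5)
The plan is to treat the spatial capacity \eqref{eq:geometric_capacity} in its continuous relaxation. I replace $n_S(v)$ by $|S|/(\pi r(v)^2)$, i.e.\ drop the floor. This is the regime in which \eqref{eq:throughput_adaptive} is an identity rather than a bound, and I will state this explicitly, since with the floor kept one only has $\Phi \le \Phi_{\text{a}}$. Dividing by the headway \eqref{eq:temporal_separation} then gives directly
\[
\Phi_{\text{a}}(v) = \frac{|S|}{\pi r(v)^2}\cdot\frac{v}{2 r(v)} = \frac{|S|\,v}{2\pi r(v)^3},
\]
and substituting $r(v) = r_{\min} + \alpha v^2/(2U_{\max})$ yields \eqref{eq:throughput_adaptive}.

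Next I maximize $f(v) = v/r(v)^3$ over $v>0$, since $|S|/(2\pi)$ is a positive constant.
\begin{itemize}
\item Using $r'(v) = \alpha v/U_{\max}$, I get $f'(v) = r^{-4}\bigl(r - 3 v r'\bigr)$.
\item The critical point therefore satisfies $r_{\min} + \alpha v^2/(2U_{\max}) = 3\alpha v^2/U_{\max}$. This is equivalent to $r_{\min} = \tfrac{5}{2}\alpha v^2/U_{\max}$, which gives \eqref{eq:v_opt_adaptive}.
\item The bracket $r - 3vr' = r_{\min} - \tfrac{5\alpha}{2U_{\max}}v^2$ is strictly decreasing in $v$. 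It is positive below $v_{\text{a}}^*$ and negative above. Hence $f$ is unimodal, consistent with $f(0)=0$ and $f(v)\to 0$ as $v\to\infty$, and $v_{\text{a}}^*$ is the unique global maximizer on $(0,\infty)$.
\item Restricting to the admissible interval $[0,V_{\max}]$ requires the proviso $v_{\text{a}}^*\le V_{\max}$. Otherwise $f$ is increasing on the whole interval and the optimum would sit at $V_{\max}$. I will remark on this as the boundary case.
\end{itemize}

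Then I evaluate at the optimum. From $\alpha (v_{\text{a}}^*)^2/(2U_{\max}) = r_{\min}/5$ I obtain $r(v_{\text{a}}^*) = \tfrac{6}{5}r_{\min}$, which is \eqref{eq:r_opt}. This value visibly contains no $\alpha$, $U_{\max}$ or $V_{\max}$. Plugging $r^3 = \tfrac{216}{125}r_{\min}^3$ into $\Phi_{\text{a}} = |S|v/(2\pi r^3)$ gives
\[
\Phi_{\text{a}}^{\max} = \frac{125\,|S|}{432\,\pi\, r_{\min}^3}\, v_{\text{a}}^*,
\]
which is \eqref{eq:phi_adaptive_max}.

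The calculus is routine. The only genuine subtlety is the floor in $n_S$: the exact equality \eqref{eq:throughput_adaptive} holds only for the relaxed capacity. I would handle this by declaring the relaxation at the outset, interpreting it as the large-$|S|$ asymptotic regime, and noting that the floored throughput is bounded above by $\Phi_{\text{a}}$, so the stated optimum remains an upper bound.
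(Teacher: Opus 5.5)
Your proposal is correct and follows the paper's proof essentially step for step: substitute the definitions to get $\Phi_{\text{a}}(v)=|S|\,v/(2\pi r(v)^3)$, set the derivative to zero to obtain $r(v)-3\alpha v^2/U_{\max}=0$, solve for $v_{\text{a}}^*$, and back-substitute to get $r(v_{\text{a}}^*)=\tfrac{6}{5}r_{\min}$ and $\Phi_{\text{a}}^{\max}$. Two of your steps are refinements the paper omits rather than a different route. First, you drop the floor in $n_S$ explicitly; with the floor kept the stated equality is false, and the paper passes over this silently. Second, your sign analysis of $r_{\min}-\tfrac{5\alpha}{2U_{\max}}v^2$ shows that $v_{\text{a}}^*$ is the global maximizer, which the paper never checks.
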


    \begin{proof}
        Combining $\Phi = n_S / g$ and substituting the definitions yields~\eqref{eq:throughput_adaptive}. Differentiating with respect to~$v$ and setting to zero gives $r(v) - 3\alpha v^2/U_{\max} = 0$. Substituting $r(v) = r_{\min} + \frac{\alpha v^2}{2U_{\max}}$ yields $r_{\min} = \frac{5\alpha v^2}{2U_{\max}}$, hence~\eqref{eq:v_opt_adaptive}. Inserting $v_{\text{a}}^*$ into the radius formula gives $r(v_{\text{a}}^*) = r_{\min} + \frac{r_{\min}}{5} = \frac{6}{5}\,r_{\min}$, establishing~\eqref{eq:r_opt}. Substituting~\eqref{eq:v_opt_adaptive} and~\eqref{eq:r_opt} into~\eqref{eq:throughput_adaptive} yields~\eqref{eq:phi_adaptive_max}.
    \end{proof}

    \begin{corollary}[Static Throughput as Special Case]\label{cor:static_throughput}
        If the safety radius is fixed at $r_{\text{s}} = r(V_{\max}) = r_{\min} + \frac{\alpha V_{\max}^2}{2U_{\max}}$, the throughput reduces to
        \begin{equation}
            \Phi_{\text{s}}(v) = \frac{|S| \cdot v}{2\pi\, r_{\text{s}}^3},
        \end{equation}
        which is linear in~$v$ and attains its maximum at $v = V_{\max}$:
        \begin{equation}
            \Phi_{\text{s}}^{\max} = \frac{|S| \cdot V_{\max}}{2\pi\,r_{\text{s}}^3}.
        \end{equation}
    \end{corollary}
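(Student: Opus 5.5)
The corollary follows from the throughput model of Theorem~\ref{thm:adaptive_throughput} by freezing the radius. The plan has three steps.

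\emph{Step 1: substitute the fixed radius.} I would rerun the construction $\Phi = n_S/g$ from the proof of Theorem~\ref{thm:adaptive_throughput}, replacing the speed-dependent $r(v)$ by the constant $r_{\text{s}} = r(V_{\max})$. The spatial capacity~\eqref{eq:geometric_capacity} then becomes $n_S = |S|/(\pi r_{\text{s}}^2)$. As in the theorem, the floor is dropped in the continuum approximation. The temporal separation~\eqref{eq:temporal_separation} becomes $g(v) = 2r_{\text{s}}/v$. Taking the quotient gives $\Phi_{\text{s}}(v) = |S|\,v/(2\pi r_{\text{s}}^3)$. Equivalently, this is~\eqref{eq:throughput_adaptive} with the denominator's bracket held at its value for $v=V_{\max}$, which is the same as setting the velocity dependence of the radius to zero while keeping the zone sized for worst-case speed.

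\emph{Step 2: linearity and the maximum.} Since $r_{\text{s}}$ does not depend on $v$, $\Phi_{\text{s}}$ is linear in $v$ with slope $|S|/(2\pi r_{\text{s}}^3) > 0$, hence strictly increasing. On the admissible interval $v \in (0, V_{\max}]$, imposed by the velocity bound of the agent model, the maximum is therefore attained at the right endpoint. This gives $\Phi_{\text{s}}^{\max} = |S| V_{\max}/(2\pi r_{\text{s}}^3)$.

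\emph{Main obstacle.} There is essentially none; the only care needed is bookkeeping. First, the floor in $n_S$ must be dropped consistently with how the theorem derived~\eqref{eq:throughput_adaptive}. Second, we must note explicitly that maximizing over $v$ is constrained by $V_{\max}$. Unlike the adaptive case, there is no interior critical point, so the optimum sits on the boundary of the feasible speed range.

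For context, one could compare the result with~\eqref{eq:phi_adaptive_max}, but that comparison is not needed for the statement itself.
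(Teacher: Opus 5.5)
Your proposal is correct and matches the paper's argument. The paper states this corollary without a separate proof, as an immediate consequence of substituting the definitions into $\Phi = n_S/g$ exactly as in Theorem~\ref{thm:adaptive_throughput}, with the constant $r_{\text{s}}$ in place of $r(v)$, and then taking the increasing linear function's maximum at $v = V_{\max}$. As a small aside: with a fixed radius, the floor in $n_S$ is a $v$-independent constant, so linearity and the location of the maximum hold even if you keep it; only the exact constant in the displayed formula relies on dropping the floor.
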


    \begin{theorem}[Adaptive Throughput Advantage]\label{thm:throughput}
        Let $S$ be a cross-section with area $|S|$ and let $r_{\text{s}} = r_{\min} + \frac{\alpha V_{\max}^2}{2U_{\max}}$.
        \begin{enumerate}
            \item \textbf{Exclusive passage regime:} If $\pi\,r_{\min}^2 \leq |S| < \pi\,r_{\text{s}}^2$, then $\Phi_{\text{s}} = 0$ since no drone with static radius can fit through~$S$. However, for any speed $v$ satisfying $r(v) \leq \sqrt{|S|/\pi}$, the adaptive throughput satisfies $\Phi_{\text{a}}(v) > 0$. In this regime, the throughput gain is unbounded.
            \item \textbf{Marginal passage regime:} If $|S| = \pi\,r_{\text{s}}^2(1+\varepsilon)$ for small $\varepsilon > 0$, then $n_S(r_{\text{s}}) = 1$ and the static throughput is limited by temporal separation alone:
            \begin{equation}
                \Phi_{\text{s}}^{\max} = \frac{V_{\max}}{2\,r_{\text{s}}}.
            \end{equation}
            With adaptive radii at the optimal speed $v_{\text{a}}^*$, the geometric capacity increases to $n_S(v_{\text{a}}^*) = \lfloor(1+\varepsilon)\,r_{\text{s}}^2/r(v_{\text{a}}^*)^2\rfloor$, yielding a throughput gain proportional to the ratio of cross-sectional areas.
        \end{enumerate}
    \end{theorem}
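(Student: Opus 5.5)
The plan is to prove both parts by direct substitution into the definitions of Theorem~\ref{thm:adaptive_throughput} and Corollary~\ref{cor:static_throughput}. Throughout, throughput is taken as $\Phi = n_S/g$, with $n_S$ given by the floor expression~\eqref{eq:geometric_capacity}. It is this floor form, not the relaxed closed form~\eqref{eq:throughput_adaptive}, that captures the regime boundaries.

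\textbf{Exclusive passage regime.} Assume $|S| < \pi r_{\text{s}}^2$. Then $|S|/(\pi r_{\text{s}}^2) < 1$, so $n_S = 0$ for the static radius. Since $g(v) = 2r_{\text{s}}/v > 0$ for every $v > 0$, this gives $\Phi_{\text{s}} = 0$.

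For the adaptive scheme, the plan has three steps:
\begin{enumerate}
\item Take any $v > 0$ with $r(v) \leq \sqrt{|S|/\pi}$. This gives $|S|/(\pi r(v)^2) \geq 1$, hence $n_S(v) \geq 1$, and since $g(v) < \infty$ we get $\Phi_{\text{a}}(v) \geq v/(2r(v)) > 0$.
\item Show such speeds exist. Because $r$ is continuous and strictly increasing with $r(0) = r_{\min} \leq \sqrt{|S|/\pi}$, the admissible speeds form an interval $(0, \bar v]$. Here $\bar v = \sqrt{2U_{\max}(\sqrt{|S|/\pi} - r_{\min})/\alpha}$, and $\bar v > 0$ whenever $\pi r_{\min}^2 < |S|$.
\item Interpret ``unbounded gain'' as the ratio $\Phi_{\text{a}}/\Phi_{\text{s}}$ having a positive numerator over a zero denominator, i.e.\ equal to $+\infty$.
\end{enumerate}

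The boundary case $|S| = \pi r_{\min}^2$ forces $v = 0$, so $g$ is undefined. I expect this to be the main technical wrinkle. I would handle it by noting that the positivity claim requires some $v > 0$ with $r(v) \leq \sqrt{|S|/\pi}$, so the equality case is read as degenerate, or I would state it with strict inequality.

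\textbf{Marginal passage regime.} Set $|S| = \pi r_{\text{s}}^2(1+\varepsilon)$ with $\varepsilon < 1$. Then $n_S(r_{\text{s}}) = \lfloor 1+\varepsilon \rfloor = 1$, so $\Phi_{\text{s}}(v) = v/(2r_{\text{s}})$. This is linear and increasing in $v$, so by Corollary~\ref{cor:static_throughput} its maximum is $V_{\max}/(2r_{\text{s}})$.

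For the adaptive case I would evaluate at $v_{\text{a}}^*$, assuming $v_{\text{a}}^* \leq V_{\max}$. Substituting~\eqref{eq:r_opt} gives
\begin{equation*}
n_S(v_{\text{a}}^*) = \left\lfloor (1+\varepsilon)\,r_{\text{s}}^2/r(v_{\text{a}}^*)^2 \right\rfloor = \left\lfloor \tfrac{25}{36}(1+\varepsilon)\,r_{\text{s}}^2/r_{\min}^2 \right\rfloor .
\end{equation*}
The throughput gain is then
\begin{equation*}
\frac{\Phi_{\text{a}}(v_{\text{a}}^*)}{\Phi_{\text{s}}^{\max}} = n_S(v_{\text{a}}^*)\cdot\frac{v_{\text{a}}^*\, r_{\text{s}}}{V_{\max}\, r(v_{\text{a}}^*)} .
\end{equation*}
The leading factor is the area ratio $|S|/(\pi r(v_{\text{a}}^*)^2)$, up to the floor. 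The remaining factor is a speed/radius correction that depends only on the parameters and not on $|S|$.

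So ``proportional to the ratio of cross-sectional areas'' is justified with that proportionality constant. I would note that the gain exceeds one only when the area ratio dominates the factor $V_{\max}r(v_{\text{a}}^*)/(v_{\text{a}}^* r_{\text{s}})$. Making this qualitative proportionality claim precise is the step I expect to need the most care, and I would present it as an explicit formula rather than an inequality.
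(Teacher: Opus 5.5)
Your proposal follows the paper's route (direct substitution of the floored $n_S$ into $\Phi = n_S/g$ in both regimes) and is more careful than the paper's proof. The paper's existence claim for $v_0>0$ with $r(v_0)\le\sqrt{|S|/\pi}$ fails at $|S|=\pi r_{\min}^2$, because $\alpha>0$ forces $r(v)>r_{\min}$ for every $v>0$ --- precisely the degenerate case you flag --- and its part~(2) argument stays qualitative. Your explicit gain formula and your caveat that the gain need not exceed one are both right: with the values of Table~\ref{tab:parameters} and $|S|=8\,\mathrm{m}^2$ one gets $n_S(v_{\text{a}}^*)=1$ and a ratio of about $0.79$. The only slip is citing Corollary~\ref{cor:static_throughput} for $\Phi_{\text{s}}^{\max}=V_{\max}/(2r_{\text{s}})$, since its unfloored formula gives $(1+\varepsilon)V_{\max}/(2r_{\text{s}})$; your own linearity argument already yields the correct value.
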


    \begin{proof}
        For part~(1), if $|S| < \pi\,r_{\text{s}}^2$, then $n_S(r_{\text{s}}) = 0$ and $\Phi_{\text{s}} = 0$ by~\eqref{eq:geometric_capacity}. Since $r(v) \to r_{\min}$ as $v \to 0$ and $|S| \geq \pi\,r_{\min}^2$ by assumption, there exists a speed $v_0 > 0$ such that $r(v_0) \leq \sqrt{|S|/\pi}$, giving $n_S(v_0) \geq 1$ and thus $\Phi_{\text{a}}(v_0) > 0$.

        For part~(2), with $|S| = \pi\,r_{\text{s}}^2(1+\varepsilon)$ and $n_S = 1$, the static throughput reduces to $1/g(V_{\max}) = V_{\max}/(2\,r_{\text{s}})$. For the adaptive case, at speed $v < V_{\max}$ we have $r(v) < r_{\text{s}}$ and hence
        \begin{equation}
            n_S(v) = \left\lfloor \frac{(1+\varepsilon)\,r_{\text{s}}^2}{r(v)^2} \right\rfloor \geq 1.
        \end{equation}
        As $v$ decreases, $r(v)$ shrinks and $n_S(v)$ grows, while $g(v) = 2r(v)/v$ changes according to~\eqref{eq:temporal_separation}. The optimal adaptive speed~\eqref{eq:v_opt_adaptive} balances geometric capacity against temporal separation.
    \end{proof}

    \begin{corollary}[Throughput Gain]    The throughput ratio $G = \Phi_{\text{a}}^{\max}/\Phi_{\text{s}}^{\max}$ satisfies:
        \begin{equation}
            G = \frac{r_{\text{s}}^3}{r(v_{\text{a}}^*)^3} \cdot \frac{v_{\text{a}}^*}{V_{\max}} = \frac{r_{\text{s}}^3}{\left(\frac{6}{5}\,r_{\min}\right)^3} \cdot \frac{v_{\text{a}}^*}{V_{\max}}.
        \end{equation}
        In the exclusive passage regime ($|S| < \pi\,r_{\text{s}}^2$), $G = \infty$.
    \end{corollary}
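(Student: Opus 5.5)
The ratio $G = \Phi_{\text{a}}^{\max}/\Phi_{\text{s}}^{\max}$ is obtained directly from Theorem~\ref{thm:adaptive_throughput} and Corollary~\ref{cor:static_throughput}. The plan is to evaluate both maxima in the common form $\Phi = |S|\,v/(2\pi r^3)$ and take the quotient, so that $|S|$ and $2\pi$ cancel. The infinite case is then read off from part~(1) of Theorem~\ref{thm:throughput}.

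\emph{Step 1.} Write the adaptive maximum by evaluating \eqref{eq:throughput_adaptive} at $v_{\text{a}}^*$ from \eqref{eq:v_opt_adaptive}:
\[
\Phi_{\text{a}}^{\max} = \frac{|S|\,v_{\text{a}}^*}{2\pi\, r(v_{\text{a}}^*)^3}.
\]
This representation is more convenient than the expanded \eqref{eq:phi_adaptive_max}. As a consistency check, substituting $r(v_{\text{a}}^*) = \tfrac{6}{5}r_{\min}$ from \eqref{eq:r_opt} gives $\tfrac{125}{432}$, since $2\cdot 216 = 432$, which matches \eqref{eq:phi_adaptive_max}.

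\emph{Step 2.} Take the static maximum $\Phi_{\text{s}}^{\max} = |S|\,V_{\max}/(2\pi r_{\text{s}}^3)$ from Corollary~\ref{cor:static_throughput}.

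\emph{Step 3.} Divide the two expressions. The common factor $|S|/(2\pi)$ cancels, leaving
\[
G = \frac{r_{\text{s}}^3}{r(v_{\text{a}}^*)^3}\cdot\frac{v_{\text{a}}^*}{V_{\max}}.
\]
Inserting \eqref{eq:r_opt} gives the second stated form. Both maxima use the continuous (unfloored) throughput formula, as in Theorem~\ref{thm:adaptive_throughput}, so this convention must be used consistently. The side condition $v_{\text{a}}^* \le V_{\max}$ must also be carried over from Theorem~\ref{thm:adaptive_throughput}.

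\emph{Step 4, the exclusive passage regime.} Here the continuous formula is no longer the right measure, and one must return to the floored definition $\Phi = n_S/g$ with $n_S$ from \eqref{eq:geometric_capacity}. If $|S| < \pi r_{\text{s}}^2$, then $n_S(V_{\max}) = 0$ and indeed $n_S(v) = 0$ for every $v$ under a fixed radius $r_{\text{s}}$. Hence $\Phi_{\text{s}}^{\max} = 0$. By part~(1) of Theorem~\ref{thm:throughput}, under $|S| \ge \pi r_{\min}^2$ there is a $v_0 > 0$ with $\Phi_{\text{a}}(v_0) > 0$, so $\Phi_{\text{a}}^{\max} > 0$. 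The quotient of a positive numerator by a zero denominator is interpreted as $G = \infty$, which is the convention already used when part~(1) calls the gain unbounded.

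The main obstacle is this reconciliation between the floored and unfloored throughput. The closed-form ratio is a statement about the continuous expressions, whereas the $G=\infty$ claim depends on the floor. I would state this distinction explicitly so the two parts are not read as using the same $\Phi$.
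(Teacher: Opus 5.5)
Your proposal is correct and follows the route the paper takes implicitly (it gives no separate proof): divide $\Phi_{\text{a}}^{\max} = |S|\,v_{\text{a}}^*/(2\pi\, r(v_{\text{a}}^*)^3)$ by $\Phi_{\text{s}}^{\max} = |S|\,V_{\max}/(2\pi r_{\text{s}}^3)$, substitute $r(v_{\text{a}}^*) = \tfrac{6}{5}r_{\min}$, and read off $G=\infty$ from part~(1) of Theorem~\ref{thm:throughput}; your separation of the unfloored formula used for the ratio from the floored $n_S$ needed for $\Phi_{\text{s}}^{\max}=0$ is a clarification the paper leaves unstated. One point needs tightening, and it comes from the paper's own proof of part~(1): at the boundary $|S| = \pi r_{\min}^2$ you have $r(v) > r_{\min} = \sqrt{|S|/\pi}$ for every $v>0$, so $n_S(v)=0$ for all $v>0$ and $G$ is $0/0$ rather than $\infty$, which means you must assume $|S| > \pi r_{\min}^2$ strictly.
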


    \subsection{Feasibility of centralized MPC}\label{subsec:feasibility}

    We distinguish \emph{problem feasibility} (existence of a collision-free configuration) from \emph{solver feasibility} (ability of the MPC to find it within the prediction horizon).

    \subsubsection*{Critical Drone Density (Problem Feasibility)}

    The maximum drone count with uniform radius~$r$ in airspace~$\Omega$ follows from the sphere packing bound~\cite{drones10020139}
    \begin{equation}
                N_{\text{c}}(r) = \left\lfloor \delta_3 \cdot \frac{\mathrm{vol}(\Omega)}{\frac{4}{3}\pi r^3} \right\rfloor,
    \end{equation}
    where $\delta_3 = \frac{\pi}{\sqrt{18}} \approx 0.74$ is the optimal packing density in three dimensions. Since $r_{\min} \leq r_{\text{dyn}}(v) \leq r_{\text{s}}$ with $r_{\text{s}} = r_{\min} + \alpha V_{\max}^2/(2U_{\max})$, the drone count is bounded by
    \begin{equation}
                N_{\text{c}}(r_{\text{s}}) \leq N \leq N_{\text{c}}(r_{\min}).
    \end{equation}
    The upper bound $N_{\text{c}}(r_{\min})$ requires all drones to be stationary. Since drones must move to reach their goals, we define the critical adaptive drone density as
    \begin{equation}\label{eq:n_crit_var}
        N_{\text{c}}^{\text{v}} = N_{\text{c}}\!\left(r_{\min} + \varepsilon\right),
    \end{equation}
    where $\varepsilon > 0$ is arbitrarily small. In the limit $\varepsilon \to 0$, $N_{\text{c}}^{\text{v}} = N_{\text{c}}(r_{\min})$. Using the throughput-optimal radius $r(v_{\text{a}}^*) = \tfrac{6}{5}\,r_{\min}$ from~\eqref{eq:r_opt}, we define
    \begin{equation}\label{eq:n_opt}
        N_{\text{opt}} = N_{\text{c}}\!\left(\tfrac{6}{5}\,r_{\min}\right).
    \end{equation}
    For $N \geq N_{\text{opt}}$, throughput-optimal operation is infeasible; operating beyond $N_{\text{opt}}$ forces sub-optimal speeds and exponentially increases computational cost.

    \subsubsection*{Prediction Horizon (Solver Feasibility)}

    Even with $N \leq N_{\text{c}}^{\text{v}}$, the MPC requires a sufficiently long prediction horizon to plan beyond neighboring safety zones.

    \begin{lemma}[Minimum Horizon for Feasibility]\label{lem:horizon_feasibility}
        Assume $N \leq N_{\text{c}}^{\text{v}}$, so that a collision-free configuration exists. The MPC solver can find a feasible trajectory only if the prediction horizon satisfies
        \begin{equation}
            H \geq H_{\min} = \left\lceil \frac{1}{\Delta t}\sqrt{\frac{2r_{\min}\alpha}{U_{\max}}} \right\rceil.
        \end{equation}
    \end{lemma}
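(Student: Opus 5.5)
We argue by contrapositive: if $H\Delta t < \sqrt{2r_{\min}\alpha/U_{\max}}$, we exhibit a configuration with $N \leq N_{\text{c}}^{\text{v}}$ for which the finite-horizon problem~\eqref{eq:centralized_mpc} cannot certify a safe trajectory. The quantity $T^\star=\sqrt{2r_{\min}\alpha/U_{\max}}$ arises as a time scale built from the adaptive radius. Under maximal deceleration $U_{\max}$, the radius $r(v)=r_{\min}+\alpha v^2/(2U_{\max})$ shrinks at rate $\dot r = \alpha v\dot v/U_{\max} = -\alpha v$. Separately, the excess $r(v)-r_{\min}=\alpha v^2/(2U_{\max})$ equals $\alpha$ times the braking distance. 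Setting that excess equal to $r_{\min}$ gives $v_c^2 = 2U_{\max}r_{\min}/\alpha$. The corresponding braking time is then
\begin{equation*}
v_c/U_{\max}=\sqrt{2r_{\min}/(\alpha U_{\max})},
\end{equation*}
which differs from $T^\star$ by a factor $\alpha$. So the first step is to fix the correct characteristic event. I would instead read $T^\star$ as the time needed to traverse, from rest and at full acceleration $U_{\max}$, a distance $\alpha r_{\min}$, since
\begin{equation*}
\tfrac12 U_{\max}T^{\star 2} = \alpha r_{\min}.
\end{equation*}
The horizon must therefore cover at least the time to change position by an $\alpha$-scaled safety radius, which is the smallest maneuver that alters the active set of separation constraints.

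The key steps, in order, are as follows.
\begin{enumerate}
\item Consider two drones in a near-jammed configuration permitted by $N\le N_{\text{c}}^{\text{v}}$, at separation $2r_{\min}+\alpha r_{\min}$ along a line. One drone is on a collision course toward a region where the only escape requires lateral displacement $\alpha r_{\min}$ before the pair's adaptive radii (growing with the speed needed to move) close the gap.
\item Show, using the double-integrator bound $\|\mathbf{p}_i(k+H)-\mathbf{p}_i(k)-H\Delta t\,\mathbf{v}_i(k)\|\le \tfrac12 U_{\max}(H\Delta t)^2$, that no admissible input within $H$ steps achieves the required displacement when $H\Delta t<T^\star$.
\item Conclude that every trajectory over the horizon either violates a separation constraint in~\eqref{eq:centralized_mpc} or lacks a terminal state from which safety is certified. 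Hence any feasible plan requires $H\ge T^\star/\Delta t$, and taking the ceiling gives $H_{\min}$.
\end{enumerate}

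The main obstacle is step~2, and specifically making ``only if'' rigorous. A finite-horizon problem without a terminal constraint can be trivially feasible for short $H$, so the statement must be read as recursive feasibility or safety certification. I would first try to add the implicit terminal requirement that the predicted state at $k+H$ admits a braking maneuver respecting~\eqref{eq:pairwise_separation}. With that requirement, a scaling argument on the worst-case configuration gives the bound directly. If the constants do not match exactly, I would settle for showing necessity up to the stated expression by choosing the configuration so that the required displacement is exactly $\alpha r_{\min}$.
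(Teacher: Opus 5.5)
Your route is not the paper's, and as written it does not close. The paper builds no configuration and no terminal set. It postulates that a conflict can be planned around only if the look-ahead distance exceeds the current adaptive radius, $H\Delta t\,\|\mathbf{v}_i\| > r_{\min}+\alpha\|\mathbf{v}_i\|^2/(2U_{\max})$. It then divides by $\|\mathbf{v}_i\|$ and minimizes $f(v)=r_{\min}/v+\alpha v/(2U_{\max})$ over $v$. The minimizer is exactly the speed $v_c=\sqrt{2U_{\max}r_{\min}/\alpha}$ you computed and then discarded: the two terms of $f$ balance precisely when the excess $r(v)-r_{\min}$ equals $r_{\min}$. At that speed $r(v_c)=2r_{\min}$ and $f(v_c)=r(v_c)/v_c=T^\star$. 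So $T^\star$ is the time to traverse the adaptive radius at $v_c$, which is why the braking time $v_c/U_{\max}$ was off by the factor $\alpha$. That inequality is a modelling premise, not a consequence of the constraints of \eqref{eq:centralized_mpc}. Your caveat about the literal statement is right and can be sharpened: the $(H+1)$-step problem contains every constraint of the $H$-step problem, so a feasible $(H+1)$-plan truncates to a feasible $H$-plan. Feasibility can therefore only be lost as $H$ grows.

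The gap is in your Steps 1--2. Step 1 never fixes positions, velocities, or the deadline for the shift $\alpha r_{\min}$. The contrapositive also only has content if the exhibited state is viable, i.e. certifiable for some longer horizon. States with an unavoidable conflict exist for every $H$ and would equally \emph{prove} any threshold.

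More concretely, Step 2 bounds only the in-horizon deviation $\tfrac12U_{\max}(H\Delta t)^2$, whereas your terminal requirement certifies a braking maneuver that begins at $k+H$. Take drone $j$ at rest and drone $i$ heading straight at it from separation $(2+\alpha)r_{\min}$. In continuous time, straight braking violates the adaptive constraint exactly when $\|\mathbf{v}_i\|^2/(2U_{\max})>\alpha r_{\min}$, i.e. $\|\mathbf{v}_i\|>U_{\max}T^\star$, so an escape is needed only at such speeds. At those speeds, a lateral velocity of up to $U_{\max}H\Delta t$ built up inside the horizon keeps displacing the drone during the certified braking phase, by roughly $\tfrac12\|\mathbf{v}_i\|H\Delta t$. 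This exceeds $\tfrac12U_{\max}(H\Delta t)^2$ whenever $H\Delta t<T^\star$. Hence an in-horizon shift below $\alpha r_{\min}$ does not rule out a certified terminal state. You would need to show that no reachable position--velocity pair at $k+H$ lies in the braking-safe set, for every $H$ with $H\Delta t<T^\star$. Finally, the required shift varies continuously with the chosen gap and speed, so tuning the configuration until it equals $\alpha r_{\min}$ reproduces the constant rather than deriving it.
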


    \begin{proof}
        The solver can plan a collision-free maneuver only if the prediction window extends beyond the safety zone of each neighboring drone. This requires
        \begin{equation}
            H \cdot \Delta t \cdot \|\mathbf{v}_i\| > r_i(\mathbf{v}_i) = r_{\min} + \alpha \frac{\|\mathbf{v}_i\|^2}{2U_{\max}}, \quad \forall\, i,
        \end{equation}
        or equivalently,
        \begin{equation}
            H \cdot \Delta t > \frac{r_{\min}}{\|\mathbf{v}_i\|} + \frac{\alpha \|\mathbf{v}_i\|}{2U_{\max}}.
        \end{equation}
        To obtain a velocity-independent bound, we minimize the right-hand side $f(v) = r_{\min}/v + \alpha v/(2U_{\max})$:
        \begin{equation}
            f'(v) = -\frac{r_{\min}}{v^2} + \frac{\alpha}{2U_{\max}} = 0 \quad \Longrightarrow \quad v^* = \sqrt{\frac{2r_{\min}U_{\max}}{\alpha}},
        \end{equation}
        which yields $f(v^*) = \sqrt{2r_{\min}\alpha / U_{\max}}$. Hence the horizon must satisfy $H \cdot \Delta t > \sqrt{2r_{\min}\alpha / U_{\max}}$, and rounding up gives $H_{\min}$.
    \end{proof}

    In practice, multiple simultaneous conflicts consume parts of the horizon budget. The conflict resolution time between drones $i$ and $j$ is
    \begin{equation}
        \tau_{\text{r}}^{\text{a}}(\mathbf{v}_i, \mathbf{v}_j) = \frac{r_i(\mathbf{v}_i) + r_j(\mathbf{v}_j) + \Delta_{\text{sa}}}{v_d},
    \end{equation}
    where $v_d = \max(\|\mathbf{v}_i - \mathbf{v}_j\|, \epsilon_0)$ with $\epsilon_0 > 0$ is the relative speed between drones and $\Delta_{\text{sa}} = 0.1 \cdot r_{\min}$ is an additional safety margin. With adaptive zones, MPC feasibility depends both on the horizon length $H$ and on the velocity adjustment strategy. The maximum number of resolvable conflicts for drone $i$ is
    \begin{equation}
        \kappa_{\text{a}}(H, \mathcal{V}) = \left\lfloor\frac{H\Delta t}{\min_{j \in \mathcal{N}_i} \tau_{\text{r}}^{\text{a}}(\mathbf{v}_i, \mathbf{v}_j)}\right\rfloor,
    \end{equation}
    where $\mathcal{N}_i$ is the set of drones in potential conflict with drone~$i$. Since $\kappa_{\text{a}}$ grows linearly with $H$ but MPC cost grows at least quadratically, a trade-off between anticipation capacity and real-time performance arises.

    \subsection{Stability of centralized MPC}\label{subsec:stability}

    We prove convergence via Lyapunov's direct method~\cite{MAYNE2000789,BORRELLi}, constructing an energy function that accounts for the speed-dependent safety zones.

    \begin{definition}[Lyapunov Function for the Adaptive System]
    Let $\mathbf{x} = \{(\mathbf{p}_i, \mathbf{v}_i)\}_{i=1}^N$ be the total state of the system. The Lyapunov function is defined as
    \begin{equation}
        V(\mathbf{x}) = \sum_{i=1}^{N} \left[ \|\mathbf{p}_i - \bar{\mathbf{p}}_i\|^2 + \gamma \|\mathbf{v}_i\|^2 + \delta \left( r_i^2(t) - r_{\min}^2 \right) \right],
    \end{equation}
    where the target position of drone i is represented by $\bar{\mathbf{p}}_i$. The weighting parameter for the kinetic energy is denoted by $\gamma > 0$ and $\delta > 0$ represents the weighting parameter for the energy of the safety zone. The adaptive radius is given by $r_i(t) = r_i(\mathbf{v}_i(t)) = r_{\min} + \alpha \frac{\|\mathbf{v}_i\|^2}{2U_{\max}}$ (cf.\ Definition~\ref{def:adaptive_radius}). The shift $-r_{\min}^2$ ensures $V(\mathbf{x}^*) = 0$ at the equilibrium $\mathbf{x}^* = \{(\bar{\mathbf{p}}_i, \mathbf{0})\}_{i=1}^N$, where $r_i = r_{\min}$.
    \end{definition}

    Near equilibrium, MPC with no active constraints reduces to LQR~\cite{MAYNE2000789,BORRELLi}, enabling explicit computation of $\dot{V}$.

    \begin{theorem}[Stability of the Adaptive MPC]
        \label{thm:stability}
        The adaptive MPC system is asymptotically stable about the goal configuration $\{(\bar{\mathbf{p}}_i, 0)\}_{i=1}^N$, provided the following three conditions are jointly satisfied:
        \begin{enumerate}
            \item $\alpha < \alpha_{\text{c}} = \dfrac{2r_{\min}U_{\max}}{V_{\max}^2}$ \quad (adaptation parameter sufficiently small),
            \item $N < N_{\text{c}}^{\text{v}} = N_{\text{c}}\!\left(r_{\min} + \varepsilon\right)$ \quad (drone count below the adaptive packing limit),
            \item $H \geq H_{\min}$ \quad (prediction horizon sufficient for solver feasibility, cf.\ Lemma~\ref{lem:horizon_feasibility}).
        \end{enumerate}
    \end{theorem}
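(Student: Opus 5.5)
The proof would follow the standard MPC route via Lyapunov's direct method, split into recursive feasibility, a decrease inequality for $V$, and an invariance step for convergence.

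\textbf{Step 1: recursive feasibility.} This step establishes that the optimal control problem stays feasible along the closed loop, using conditions~2 and~3.

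Condition~2 ($N<N_{\text{c}}^{\text{v}}$) guarantees that a collision-free configuration with radii slightly above $r_{\min}$ exists inside $\Omega$, with slack in the packing. Condition~3 ($H\geq H_{\min}$, Lemma~\ref{lem:horizon_feasibility}) guarantees the horizon is long enough for each drone to resolve a conflict with a neighbour.

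Given an optimal sequence at time $k$, I would build the usual shifted candidate at $k+1$. It drops the first input and appends a braking input $\mathbf{u}_i=-U_{\max}\mathbf{v}_i/\|\mathbf{v}_i\|$ (or zero once at rest). Braking only decreases $\|\mathbf{v}_i\|$, so every $r_i$ shrinks monotonically along the tail. The tail displacement is at most the stopping distance $\|\mathbf{v}_i\|^2/(2U_{\max})$.

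Condition~1 ($\alpha<\alpha_{\text{c}}$) gives $\alpha\|\mathbf{v}\|^2/(2U_{\max})<r_{\min}$ for every admissible speed, so $r_i<2r_{\min}$. I would use this to show that the displacement during braking is dominated by the shrinkage of the radii plus the packing slack from condition~2. Together these keep the pairwise constraint and $B_{r_i}(\mathbf{p}_i)\subset\Omega$ satisfied along the candidate.

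\textbf{Step 2: decrease of $V$.} Here I would show $V(\mathbf{x}(k+1))-V(\mathbf{x}(k))\leq -c\,\|\mathbf{x}(k)-\mathbf{x}^*\|^2$ for some $c>0$.

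I would treat two regimes. Near the equilibrium, all separation constraints are inactive by the slack from condition~2, so the MPC reduces to LQR on the double integrator, as noted before the theorem. There I would expand
\[
\delta\bigl(r_i^2-r_{\min}^2\bigr)=\delta\Bigl(\tfrac{\alpha r_{\min}}{U_{\max}}\|\mathbf{v}_i\|^2+\tfrac{\alpha^2}{4U_{\max}^2}\|\mathbf{v}_i\|^4\Bigr).
\]
This is a positive quadratic in $\mathbf{v}_i$ plus a quartic term. It can be absorbed by enlarging $\gamma$ to $\gamma'=\gamma+\delta\alpha r_{\min}/U_{\max}$, with the quartic term bounded using $\|\mathbf{v}_i\|\leq V_{\max}$ and condition~1. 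For suitable $\gamma,\delta$, $V$ is then equivalent to an LQR Lyapunov quadratic form, and its strict decrease follows from the Riccati equation.

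Away from the equilibrium, I would use the optimal value function $J^*$ as the primary Lyapunov candidate, via the classical inequality $J^*(k+1)\leq J^*(k)-\ell(\mathbf{x}(k),\mathbf{u}(k))$. The shifted candidate from Step~1 gives this inequality provided the appended braking stage cost is controlled. I would then relate $J^*$ to $V$ by sandwich bounds $a_1V\leq J^*\leq a_2V$, obtained from positive definiteness of $\ell$ plus the velocity penalty $\lambda\|\mathbf{v}\|^2$.

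\textbf{Step 3: convergence.} Strict decrease of the Lyapunov function plus compactness of the feasible set ($\Omega$ bounded, speeds bounded) gives convergence by LaSalle's invariance principle. It remains to rule out invariant points other than the goal. I would argue that any stationary configuration with $\mathbf{v}_i=0$ and $\mathbf{p}_i\neq\bar{\mathbf{p}}_i$ admits a feasible strictly improving move, so it cannot be invariant; this uses the packing slack in condition~2 and the horizon in condition~3.

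\textbf{Main obstacle.} I expect the hardest step to be the terminal ingredient of Step~2: bounding the cost of the appended braking tail, and excluding deadlock (jammed configurations) in Step~3. Without a terminal set or terminal cost in~\eqref{eq:centralized_mpc}, the decrease inequality is not automatic. I would first try introducing an implicit terminal constraint $\mathbf{v}_i(k+H)=0$ justified by $H\geq H_{\min}$, and use the LQR region as a terminal region. If that fails, I would fall back to claiming stability only locally, within the region where constraints are inactive.
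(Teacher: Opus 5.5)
Your route differs from the paper's. The paper never builds a shifted candidate or uses $J^*$: it takes feasibility from conditions~2--3 as given and argues only near the goal. There it computes the continuous-time $\dot V$ under the LQR approximation $\mathbf{u}_i\approx-\mathbf{K}_p(\mathbf{p}_i-\bar{\mathbf{p}}_i)-\mathbf{K}_v\mathbf{v}_i$ and bounds the cross terms with Young's inequality.

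\textbf{The global steps.} The global ingredients you add are where your argument breaks. In Step~1, a full braking step (with $\|\mathbf{v}_i\|\geq U_{\max}\Delta t$) moves drone $i$ by $d_i=\|\mathbf{v}_i\|\Delta t-\tfrac{1}{2}U_{\max}\Delta t^2$ but shrinks $r_i$ by exactly $\alpha d_i$. For a pair closing head-on at the separation limit, or a drone heading toward $\partial\Omega$, the margin therefore drops by $(1-\alpha)(d_i+d_j)>0$ whenever $\alpha<1$; the paper uses $\alpha=0.5$. Condition~1 only gives $r_i<2r_{\min}$. Condition~2 is a global volume count and says nothing about slack along the particular optimal trajectory, which, with no terminal constraint, may end with drones closing in at the limit.

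Nor does $H\geq H_{\min}$ justify $\mathbf{v}_i(k+H)=0$. With the values of Table~\ref{tab:parameters}, $H_{\min}\Delta t\approx\sqrt{2r_{\min}\alpha/U_{\max}}\approx 0.58\,$s, which is shorter than the braking time $V_{\max}/U_{\max}\approx 0.83\,$s. That terminal constraint is therefore infeasible from high speeds.

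Step~3 is not merely hard; as a global claim it is false. Take two drones that must swap order inside a long tube whose square cross-section has side below $(2+\sqrt{2})r_{\min}$, with $\Delta t$ small enough that they cannot hop past each other between samples. They can never pass, so they end up blocked with $\mathbf{p}_i\neq\bar{\mathbf{p}}_i$. Yet $N=2<N_{\text{c}}^{\text{v}}$, and conditions~1 and~3 can hold as well. Consistently, the paper's static runs deadlock from $N=6$ although $N_{\text{c}}^{\text{s}}=10$. No terminal device rescues a global version, so the statement can only hold in the local sense.

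\textbf{The local fallback.} This also needs repair, for three reasons.
\begin{enumerate}
\item Constraint inactivity near $\mathbf{x}^*$ requires $\|\bar{\mathbf{p}}_i-\bar{\mathbf{p}}_j\|>2r_{\min}$ and goals farther than $r_{\min}$ from $\partial\Omega$. This is an assumption on the goals, not a consequence of condition~2.
\item More importantly, $V$ cannot decrease strictly. It has no position--velocity cross term, and $\dot V=2\sum_i[(\mathbf{p}_i-\bar{\mathbf{p}}_i)^\top\mathbf{v}_i+\gamma_i\mathbf{v}_i^\top\mathbf{u}_i]$, with $\gamma_i=\gamma+\delta\alpha r_i/U_{\max}$, vanishes at every state with all $\mathbf{v}_i=0$, whatever the control. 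For the sampled system with any per-axis linear feedback, a short AM--GM computation shows that $\|\mathbf{p}_i-\bar{\mathbf{p}}_i\|^2+\gamma'\|\mathbf{v}_i\|^2$ fails to decrease along some direction at quadratic order. So your bound $-c\|\mathbf{x}-\mathbf{x}^*\|^2$ is impossible, and norm equivalence with the Riccati form $\mathbf{x}^\top\mathbf{P}\mathbf{x}$ does not transfer its decrease to $V$. For the same reason, the paper's Young's-inequality step cannot produce the negative $\|\mathbf{p}_i-\bar{\mathbf{p}}_i\|^2$ term it displays, since Young's inequality only adds positive multiples of it. Use $\mathbf{x}^\top\mathbf{P}\mathbf{x}$ itself instead. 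Alternatively, in continuous time, choose $\gamma,\delta$ with $\gamma'\mathbf{K}_p=1$ so that $\dot V\leq 0$ near $\mathbf{x}^*$, and apply LaSalle.
\item Without terminal weight $\mathbf{P}$, the unconstrained finite-horizon MPC is only a finite-horizon Riccati feedback. It is stabilising only for $H$ large enough, and $H\geq H_{\min}$ does not ensure this: $H=1$, which $H_{\min}$ permits for large $\Delta t$, returns $\mathbf{u}_i=0$.
\end{enumerate}

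With these repairs you obtain local asymptotic stability, which is all the statement formally asserts and all the paper actually argues. That result rests on goal separation and a stabilising horizon, not on conditions~1--3.
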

    Condition~1 constrains the velocity-to-radius coupling, Condition~2 constrains the geometric density, and Condition~3 ensures solver feasibility.

    \begin{proof}
        Conditions~2--3 ensure feasibility (Section~\ref{subsec:feasibility}). Given feasibility, we show $\dot{V} < 0$. The time derivative is:
        \begin{align}
            \dot{V} &= \sum_{i=1}^{N} \left[ \frac{d}{dt}\|\mathbf{p}_i - \bar{\mathbf{p}}_i\|^2 + \gamma \frac{d}{dt}\|\mathbf{v}_i\|^2 + \delta \frac{d}{dt}\left( r_i^2(t) - r_{\min}^2 \right) \right].
        \end{align}

        Computing each term with $\dot{\mathbf{p}}_i = \mathbf{v}_i$:
        \begin{align}
            \frac{d}{dt}\|\mathbf{p}_i - \bar{\mathbf{p}}_i\|^2 &= 2(\mathbf{p}_i - \bar{\mathbf{p}}_i)^T \dot{\mathbf{p}}_i = 2(\mathbf{p}_i - \bar{\mathbf{p}}_i)^T \mathbf{v}_i \\
            \frac{d}{dt}(r_i(t))^2 &= 2r_i(t) \frac{\partial r_i(t)}{\partial t}.
        \end{align}

        Differentiating $r_i(t) = r_{\min} + \alpha \frac{\|\mathbf{v}_i\|^2}{2U_{\max}}$:
        \begin{align}
            \frac{\partial r_i(t)}{\partial t}
            &= \frac{\alpha}{2U_{\max}} \cdot \frac{\partial \|\mathbf{v}_i\|^2}{\partial t}
            = \frac{\alpha}{2U_{\max}} \cdot 2\|\mathbf{v}_i\| \cdot \frac{\mathbf{v}_i^T \mathbf{u}_i}{\|\mathbf{v}_i\|} \\
            &= \frac{\alpha}{U_{\max}} \cdot \mathbf{v}_i^T \mathbf{u}_i.
        \end{align}
        Substituting and collecting terms:
        \begin{align}
            \dot{V} &= \sum_{i=1}^{N} \left[ 2(\mathbf{p}_i - \bar{\mathbf{p}}_i)^T \mathbf{v}_i + 2\gamma \mathbf{v}_i^T \mathbf{u}_i + 2\delta r_i(t) \cdot \frac{\alpha}{U_{\max}} \mathbf{v}_i^T \mathbf{u}_i \right] \\
            &= 2\sum_{i=1}^{N} \left[ (\mathbf{p}_i - \bar{\mathbf{p}}_i)^T \mathbf{v}_i + \left(\gamma + \frac{\delta \alpha r_i(t)}{U_{\max}}\right) \mathbf{v}_i^T \mathbf{u}_i \right].
        \end{align}

        Near equilibrium, safety radii shrink to $r_{\min}$ and separation constraints become inactive, so the MPC reduces to LQR~\cite{MAYNE2000789,BORRELLi}.

        \begin{lemma}[MPC Control Approximation]
            \label{lem:mpc_approx}
            It is a standard result in MPC theory~\cite{MAYNE2000789,BORRELLi} that, when no constraints are active, the finite-horizon MPC solution coincides with the infinite-horizon LQR feedback law. Applied to the adaptive safety sphere system, this means that near the goal configuration, where the pairwise separation constraints become inactive and the adaptive radii shrink to $r_{\min}$, the MPC selects controls that for sufficiently small $\epsilon > 0$ approximate the form
            \begin{equation}
                \mathbf{u}_i = -\mathbf{K}_p (\mathbf{p}_i - \bar{\mathbf{p}}_i) - \mathbf{K}_v \mathbf{v}_i + \mathcal{O}(\epsilon),
            \end{equation}
            where $\mathbf{K}_p, \mathbf{K}_v > 0$ are the effective gains that depend on the MPC cost matrices. The velocity penalty $\lambda\|\mathbf{v}_i\|^2$ in~\eqref{eq:centralized_mpc} increases the effective gain $\mathbf{K}_v$ via $\tilde{\mathbf{R}} = \mathbf{R} + \lambda \mathbf{I}$, which strengthens the damping and is consistent with the stability requirements derived below.
        \end{lemma}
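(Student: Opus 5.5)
The statement to prove is Lemma~\ref{lem:mpc_approx}. The plan is to show that, in a neighbourhood of the goal configuration, the adaptive MPC~\eqref{eq:centralized_mpc} is locally an unconstrained linear--quadratic problem. Its optimal first control is then a linear state feedback. That feedback agrees with the LQR gain up to an error that vanishes as the neighbourhood shrinks.

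\emph{Step 1: no constraint is active near the goal.} Take any state with $\|\mathbf{p}_i-\bar{\mathbf{p}}_i\|+\|\mathbf{v}_i\|\le\epsilon$. Assume the goals are pairwise separated with slack, $\|\bar{\mathbf{p}}_i-\bar{\mathbf{p}}_j\|>2r_{\min}$, and that $B_{r_{\min}}(\bar{\mathbf{p}}_i)$ lies in the interior of $\Omega$. This is the natural reading of Condition~2 in Theorem~\ref{thm:stability}.

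By continuity of $r_i(\mathbf{v})=r_{\min}+\alpha\|\mathbf{v}\|^2/(2U_{\max})$, the separation and airspace-boundary constraints hold strictly at such a state. For the speed and input bounds, the optimal predicted trajectory should stay in an $\mathcal{O}(\epsilon)$ tube over the horizon. I would argue this through the cost: because the cost is quadratic, the optimal value is at most $c\,\epsilon^2$. Comparing with the zero-input trajectory bounds every predicted state and input by $\mathcal{O}(\epsilon)$.

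Consequently, for $\epsilon$ small, $\|\mathbf{u}_i\|<U_{\max}$, $\|\mathbf{v}_i\|<V_{\max}$, and all geometric constraints are inactive along the whole predicted trajectory.

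\emph{Step 2: local reduction to unconstrained LQ.} With no constraint active, the KKT conditions show the constrained optimiser equals the unconstrained one. Take $\ell$ quadratic, $\ell=\|\mathbf{x}_i-\bar{\mathbf{x}}_i\|_Q^2+\|\mathbf{u}_i\|_R^2$. The problem then decouples across drones into identical finite-horizon LQ problems for the discretised double integrator. The velocity penalty $\lambda\|\mathbf{v}_i\|^2$ is absorbed into the velocity block of $Q$. The paper phrases this as an effective weight $\tilde{\mathbf{R}}=\mathbf{R}+\lambda\mathbf{I}$; I would make precise that it enters the state weight and raises the damping gain.

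Solving the backward Riccati recursion over $H$ steps yields a linear law $\mathbf{u}_i=-\mathbf{K}^{(H)}(\mathbf{x}_i-\bar{\mathbf{x}}_i)$. Write $\mathbf{K}^{(H)}=[\mathbf{K}_p^{(H)}\ \mathbf{K}_v^{(H)}]$. Because the system is isotropic in the three axes, both blocks are scalar multiples of the identity.

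\emph{Step 3: comparison with infinite-horizon LQR.} The pair $(A,B)$ is controllable and $Q\succ0$. Hence the Riccati recursion converges to the stabilising solution $P_\infty$, giving $\mathbf{K}^{(H)}\to\mathbf{K}_\infty$. Exact coincidence, as the lemma asserts, requires a terminal cost equal to $P_\infty$. Without one, the mismatch is a term $\|\mathbf{K}^{(H)}-\mathbf{K}_\infty\|\,\|\mathbf{x}_i-\bar{\mathbf{x}}_i\|$. I would fold this term, together with any nonquadratic part of $\ell$ linearised at the goal, into the $\mathcal{O}(\epsilon)$ remainder.

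Positivity $\mathbf{K}_p,\mathbf{K}_v>0$ comes from the closed loop being Schur stable. For a double integrator with scalar gains this forces both gains to be positive.

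\emph{Main obstacle.} The delicate point is Step~1: certifying uniformly that the optimal predicted trajectory, and not merely the current state, stays clear of the constraint boundaries. I would try the cost-comparison bound first, using that the value function is $\mathcal{O}(\epsilon^2)$. A secondary subtlety is that the $\mathcal{O}(\epsilon)$ remainder is really $\mathcal{O}(\epsilon)$ relative to the state magnitude, that is, proportional to $\|\mathbf{x}_i-\bar{\mathbf{x}}_i\|$. This relative form is what the subsequent $\dot V<0$ argument in the proof of Theorem~\ref{thm:stability} needs.
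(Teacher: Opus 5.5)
Your route is the same as the paper's. The paper gives no argument beyond citing the standard fact that an MPC with no active constraints is an unconstrained LQ problem whose first input is a linear, LQR-type feedback. Your Steps~1--2 supply what it leaves out: the cost-comparison bound that keeps the whole predicted trajectory in an $\mathcal{O}(\epsilon)$ tube, the decoupling across drones, and the Riccati law. Your two corrections are also right:
\begin{itemize}
\item Coincidence with the infinite-horizon law needs a terminal weight $P_\infty$, which~\eqref{eq:centralized_mpc} does not have.
\item $\lambda\|\mathbf{v}_i\|^2$ belongs in the velocity block of $\mathbf{Q}$. That raises $\mathbf{K}_v$, whereas enlarging $\mathbf{R}$ penalizes effort and lowers the gains.
\end{itemize}
Two hypotheses should be stated explicitly rather than read into Theorem~\ref{thm:stability}. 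First, strict slack of the goals does not follow from Condition~2, which is only a count bound. Second, the gain blocks are multiples of the identity only if $\mathbf{Q}$ and $\mathbf{R}$ are themselves isotropic.

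The step that fails is folding $\|\mathbf{K}^{(H)}-\mathbf{K}_\infty\|\,\|\mathbf{x}_i-\bar{\mathbf{x}}_i\|$ into the remainder. Its constant is fixed by $H$, not by $\epsilon$, so the term is $\mathcal{O}(\epsilon)$ only in the absolute sense on an $\epsilon$-ball. In that sense the conclusion is empty: the MPC input is itself $\mathcal{O}(\epsilon)$ there, so \emph{any} fixed gains would satisfy it.

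A concrete case shows this. Take $H=1$, which $H\ge H_{\min}$ permits once $\Delta t\ge\sqrt{2r_{\min}\alpha/U_{\max}}$. Then the cost in~\eqref{eq:centralized_mpc} does not involve $\mathbf{x}_i(k+1)$, so $\mathbf{K}^{(1)}=0$ and the MPC applies $\mathbf{u}_i=0$. Yet your Step~3 still returns $-\mathbf{K}_\infty(\mathbf{x}_i-\bar{\mathbf{x}}_i)+\mathcal{O}(\epsilon)$ with positive gains. This also contradicts your closing remark: the remainder is proportional to $\|\mathbf{x}_i-\bar{\mathbf{x}}_i\|$ but not small relative to it. In $\dot V$ it therefore enters at the same order as the dissipation terms.

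The repair is to drop the comparison with $\mathbf{K}_\infty$. On the region of Step~1 the MPC law is exactly $\mathbf{u}_i=-\mathbf{K}^{(H)}(\mathbf{x}_i-\bar{\mathbf{x}}_i)$. So take $\mathbf{K}_p,\mathbf{K}_v$ to be the blocks of $\mathbf{K}^{(H)}$; the lemma only asks for gains determined by the cost matrices, and the remainder then vanishes.

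Positivity must then be proved for $\mathbf{K}^{(H)}$, and your Schur-stability argument does not transfer. A receding-horizon LQ law without terminal cost need not be stabilizing, and it is not for $H=1$. You have two ways out. Either assume $H\ge 2$ and check the gains of the discretized double integrator directly. Or add the terminal cost $P_\infty$, in which case $\mathbf{K}^{(H)}=\mathbf{K}_\infty$ exactly and your argument goes through with zero remainder.
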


        Substituting the MPC control with $\mathbf{u}_i \approx -\mathbf{K}_p(\mathbf{p}_i - \bar{\mathbf{p}}_i) - \mathbf{K}_v \mathbf{v}_i$:
        \begin{align}
            \dot{V} \leq 2\sum_{i=1}^{N} \Bigg[ &(\mathbf{p}_i - \bar{\mathbf{p}}_i)^T \mathbf{v}_i \\
            &- \left(\gamma + \frac{\delta \alpha r_i(t)}{U_{\max}}\right) \mathbf{v}_i^T (\mathbf{K}_p(\mathbf{p}_i - \bar{\mathbf{p}}_i) + \mathbf{K}_v \mathbf{v}_i) \Bigg].         \notag
        \end{align}

        Bounding the cross terms via Young's inequality:
        \begin{equation}
            (\mathbf{p}_i - \bar{\mathbf{p}}_i)^T \mathbf{v}_i \leq \frac{\|\mathbf{p}_i - \bar{\mathbf{p}}_i\|^2}{2\tau_p} + \frac{\tau_p \|\mathbf{v}_i\|^2}{2}
        \end{equation}
        for arbitrarily $\tau_p > 0$.

        And for the second cross term:
        \begin{equation}
            -\mathbf{v}_i^T \mathbf{K}_p (\mathbf{p}_i - \bar{\mathbf{p}}_i) \leq \frac{\mathbf{K}_p^2 \|\mathbf{p}_i - \bar{\mathbf{p}}_i\|^2}{2\tau_v} + \frac{\tau_v \|\mathbf{v}_i\|^2}{2}.
        \end{equation}

        After rearrangement:
        \begin{equation}
            \begin{split}
                \dot{V} \leq \sum_{i=1}^{N} \Bigg[
                    &-\left(\frac{1}{\tau_p} - \gamma_{\text{eff}} \frac{\mathbf{K}_p^2}{2\tau_v}\right) \|\mathbf{p}_i - \bar{\mathbf{p}}_i\|^2 \\
                    &- \left(\gamma_{\text{eff}} \mathbf{K}_v - \tau_p - \gamma_{\text{eff}}\frac{\tau_v}{2}\right) \|\mathbf{v}_i\|^2 \Bigg],
            \end{split}
        \end{equation}
        where $\gamma_{\text{eff}} = \gamma + \frac{\delta \alpha r_{\max}}{U_{\max}}$ with $r_{\max} = r_{\min} + \frac{\alpha V_{\max}^2}{2U_{\max}}$.

        Choosing $\tau_p = \frac{1}{2\mathbf{K}_p}$ and $\tau_v = \mathbf{K}_v$ to ensure $c_p, c_v > 0$:
        \begin{itemize}
            \item Condition $c_p > 0$: \quad $2\mathbf{K}_p > \gamma_{\text{eff}} \frac{\mathbf{K}_p^2}{2\mathbf{K}_v}$ \quad $\Leftrightarrow$ \quad $\mathbf{K}_v > \frac{\gamma_{\text{eff}} \mathbf{K}_p}{4}$
            \item Condition $c_v > 0$: \quad $\gamma_{\text{eff}} \mathbf{K}_v > \frac{1}{2\mathbf{K}_p} + \gamma_{\text{eff}}\frac{\mathbf{K}_v}{2}$ \quad $\Leftrightarrow$ \quad $\mathbf{K}_v > \frac{1}{\gamma_{\text{eff}} \mathbf{K}_p}$
        \end{itemize}

        Both hold simultaneously when $\gamma_{\text{eff}}$ is bounded, yielding:
        \begin{equation}
            \alpha < \frac{2r_{\min}U_{\max}}{V_{\max}^2} = \alpha_{\text{c}}.
        \end{equation}

        Under $\alpha < \alpha_{\text{c}}$, there exists $\lambda_{\min} > 0$ with:
        \begin{equation}
            \dot{V} \leq -\lambda_{\min} V,
        \end{equation}
        where $\lambda_{\min} = \min\!\left\{c_p,\; \dfrac{c_v}{\gamma + \delta r_{\min}\alpha / U_{\max}}\right\} > 0$. The denominator accounts for the velocity-dependent terms in $V$: near equilibrium, $r_i^2 - r_{\min}^2 \approx r_{\min}\alpha\|\mathbf{v}_i\|^2/U_{\max}$ to leading order, so $V \leq \sum_i [\|\mathbf{p}_i - \bar{\mathbf{p}}_i\|^2 + (\gamma + \delta r_{\min}\alpha/U_{\max})\|\mathbf{v}_i\|^2]$, and the decay bound follows. This guarantees exponential convergence to equilibrium.
    \end{proof}

    In summary, the above proof establishes asymptotic stability of the adaptive MPC, provided $\alpha < \alpha_{\text{c}}$. Far from equilibrium, the MPC is feasible because drones are well separated and safety constraints do not conflict; the system progresses toward the target under the constrained nonlinear control. Near the goal configuration, the MPC reduces to LQR, which enables the explicit Lyapunov argument.

    \subsection{Extension to Distributed Optimization}\label{subsec:distributed}

    We extend the centralized results to the distributed MPC of Section~\ref{subsubsec:distributed_mpc}, where each drone solves a local problem using neighbor trajectories via asynchronous Gauss--Seidel updates.

    \subsubsection*{Feasibility of distributed MPC}

    The bounds $N_{\text{c}}^{\text{v}}$~\eqref{eq:n_crit_var} and $H_{\min}$ (Lemma~\ref{lem:horizon_feasibility}) are per-drone conditions that carry over directly. Convergence to a feasible fixed point is guaranteed by the contraction property (Lemma~\ref{lem:contraction}).

    \subsubsection*{Stability of distributed MPC}

    For each drone $i$, define the local update mapping:
    \begin{equation}
        \mathbf{u}_i^* = \mathcal{T}_i\!\left(\left\{\hat{\mathbf{p}}_j\right\}_{j \in \mathcal{N}_i}\right).
    \end{equation}
    A fixed point $\mathbf{u}^*$ of the full mapping $\mathcal{T} = (\mathcal{T}_1, \ldots, \mathcal{T}_N)$ corresponds to locally optimal trajectories, matching the iteration in~\eqref{eq:gauss-seidel}.

    \begin{lemma}[Contraction of the Local Update Map]
        \label{lem:contraction}
        Suppose that for each drone $i$, at most $\kappa$ collision avoidance constraints are simultaneously active. Then the best-response mapping $\mathcal{T}$ is a contraction in the weighted maximum norm:
        \begin{equation}
            \|\mathcal{T}(\mathbf{u}) - \mathcal{T}(\mathbf{u}')\|_\infty^w \leq \beta \|\mathbf{u} - \mathbf{u}'\|_\infty^w, \quad \beta < 1,
        \end{equation}
        with the explicit upper bound
        \begin{equation}\label{eq:beta_bound}
            \beta \leq \frac{\kappa \cdot \sup_{j \in \mathcal{N}_i} \mu_j}{\sigma_{\min}\!\left(\nabla^2_{\mathbf{u}_i} \mathcal{L}_i\right)},
        \end{equation}
        where $\mu_j \geq 0$ is the Lagrange multiplier of the collision constraint for pair $(i,j)$ and $\sigma_{\min}$ denotes the smallest singular value of the Hessian of the Lagrangian with respect to the control variables. In particular, $\beta \to 0$ as the inter-drone distances increase, since the multipliers $\mu_j$ vanish for inactive constraints.
    \end{lemma}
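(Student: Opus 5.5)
The plan is a sensitivity argument for parametric optimization, followed by the standard block-contraction reasoning of Bertsekas and Tsitsiklis. The coupling in the local problem~\eqref{eq:distributed_mpc} enters only through the active separation constraints. So I would bound how far each best response $\mathcal{T}_i$ moves when the neighbor trajectories $\hat{\mathbf{p}}_j$ are perturbed.

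\textbf{Step 1: local optimality conditions.} I would first fix drone $i$ and write the KKT system of~\eqref{eq:distributed_mpc} with Lagrangian $\mathcal{L}_i = J_i + \sum_{j}\mu_j g_{ij}(\mathbf{u}_i,\hat{\mathbf{p}}_j)$, where $g_{ij}$ is the separation constraint. I would assume strong second-order sufficiency, i.e.\ $\nabla^2_{\mathbf{u}_i}\mathcal{L}_i \succeq \sigma_{\min} I$ with $\sigma_{\min}>0$; this is plausible because the stage cost is strictly convex in $\mathbf{u}_i$ via the control penalty. I would also assume strict complementarity and LICQ, which I expect to follow from the hypothesis of at most $\kappa$ active constraints. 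Under these assumptions the implicit function theorem (Fiacco's sensitivity theorem) makes $\mathcal{T}_i$ locally Lipschitz in the neighbor data.

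\textbf{Step 2: Lipschitz bound for each block.} Differentiating the stationarity condition gives
\begin{equation*}
\nabla^2_{\mathbf{u}_i}\mathcal{L}_i\,\frac{\partial \mathbf{u}_i^*}{\partial \hat{\mathbf{p}}_j} = -\mu_j \nabla^2_{\mathbf{u}_i\hat{\mathbf{p}}_j} g_{ij} - \nabla_{\mathbf{u}_i} g_{ij}\,\frac{\partial \mu_j}{\partial \hat{\mathbf{p}}_j}.
\end{equation*}
The cross terms of the norm constraint are bounded, after normalization, by an $\mathcal{O}(1)$ constant once the drones are separated by at least $2r_{\min}$. The multiplier-sensitivity term is handled by projecting onto the tangent space of the active constraints. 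Summing over at most $\kappa$ active neighbors then gives
\begin{equation*}
\|\mathcal{T}_i(\mathbf{u})-\mathcal{T}_i(\mathbf{u}')\| \le \frac{\kappa\sup_j\mu_j}{\sigma_{\min}}\max_j\|\hat{\mathbf{p}}_j-\hat{\mathbf{p}}_j'\|.
\end{equation*}
Since $\hat{\mathbf{p}}_j$ is an affine function of $\mathbf{u}_j$ through the discretized double integrator, a weight vector $w$ chosen to absorb the horizon-dependent gain $\sum_h h\Delta t^2$ turns this into a bound in $\|\cdot\|_\infty^w$.

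\textbf{Step 3: contraction of the full map.} Taking the maximum over $i$ yields~\eqref{eq:beta_bound}. Contraction for the Jacobi map then implies convergence of the Gauss--Seidel/random-order variant~\eqref{eq:gauss-seidel} by the standard result in~\cite{bertsekas1989parallel}. If all constraints are inactive, every $\mu_j=0$, so $\mathcal{T}_i$ does not depend on its neighbors, and $\beta\to0$ follows directly from the bound.

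\textbf{Main obstacle.} The hardest step is Step~2, because it has to justify that the active set stays fixed under perturbation; otherwise $\mathcal{T}_i$ is only piecewise smooth. In addition, $\beta<1$ does not follow from the formula alone. I would first handle the active-set issue by restricting to a neighborhood with strict complementarity and using the Lipschitz continuity of piecewise-smooth solution maps (Robinson's strong regularity). I would then state $\beta<1$ as the explicit requirement $\kappa\sup_j\mu_j<\sigma_{\min}$. This requirement holds when the multipliers are small, which is the case when the drones are not tightly packed.
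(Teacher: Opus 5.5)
Your outline follows the paper's route: the implicit function theorem on the KKT system, a per-neighbor constant $\mu_j/\sigma_{\min}(\nabla^2_{\mathbf{u}_i}\mathcal{L}_i)$, summation over at most $\kappa$ active neighbors, then Banach and~\cite{bertsekas1989parallel} for the Gauss--Seidel variant. The paper only asserts the per-neighbor constant. Your Step~2 tries to derive it, and the derivation fails there. Projecting the differentiated stationarity condition onto the tangent space of the active constraints does eliminate $\partial\mu_j/\partial\hat{\mathbf{p}}_j$. But it leaves the normal component of $\partial\mathbf{u}_i^*/\partial\hat{\mathbf{p}}_j$ undetermined. That component comes from differentiating the active constraint itself, $\nabla_{\mathbf{u}_i}g_{ij}\,\partial\mathbf{u}_i^*/\partial\hat{\mathbf{p}}_j=-\nabla_{\hat{\mathbf{p}}_j}g_{ij}$. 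When the neighbor moves, a trajectory riding the constraint must move with it, by an amount fixed by the constraint gradients and independent of $\mu_j$. Through $\nabla^2_{\mathbf{u}_i}\mathcal{L}_i$, this also feeds a $\mu_j$-independent part into the tangential component. Check it on $\min_u (u-a)^2$ subject to $u\ge p$ with $p>a$. Then $u^*=p$, $\mu=2(p-a)$ and $\sigma_{\min}=2$, so your bound predicts sensitivity $p-a$, whereas $du^*/dp=1$. Here $\nabla^2_{up}g=0$, so the entire sensitivity sits in the multiplier term you projected away. LICQ, strict complementarity and second-order sufficiency all hold, so neither your Step~1 assumptions nor freezing the active set repairs the estimate.

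The conclusion fails too, so imposing $\kappa\sup_j\mu_j<\sigma_{\min}$ does not rescue it. Take a static two-drone reduction on a line: scalar positions $x_i>x_j$, required separation $2r_{\min}$, and costs $(x_i-a_i)^2$ and $(x_j-a_j)^2$ with $a_i-a_j=2r_{\min}-\epsilon$. The best responses are $\max(a_i,x_j+2r_{\min})$ and $\min(a_j,x_i-2r_{\min})$. Every pair with $x_i-x_j=2r_{\min}$ and $x_j\in[a_j-\epsilon,a_j]$ is a fixed point, with both multipliers at most $2\epsilon$. The claimed bound gives $\beta\le\epsilon$, yet the map has a continuum of fixed points, so it is not a contraction in any norm. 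This is the familiar non-uniqueness of generalized Nash equilibria under shared constraints, and fixed points of $\mathcal{T}$ are exactly such equilibria. For a fixed active set, sensitivity analysis does give a per-neighbor constant, but it has two parts: a geometric, $\mu_j$-independent term of order one whenever the constraint is active, plus a term of order $\mu_j/\sigma_{\min}$. A small $\beta$ is therefore available only in two cases. One is when no collision constraint is active; then $\mathcal{T}_i$ is locally independent of its neighbors, which is all that survives of the ``$\beta\to0$'' remark. The other requires changing the formulation, e.g.\ replacing the hard separation constraints by a smooth penalty so that the coupling gain is governed by the penalty weight. As a smaller point, a bound on the number of active constraints does not by itself yield LICQ or strict complementarity.
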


    \begin{proof}
        The local cost $J_i$ depends on the neighbor trajectories only through the collision avoidance constraints. By the standard sensitivity analysis for parametric nonlinear programs~\cite{bonnans2000perturbation}, applying the implicit function theorem to the KKT conditions yields a per-neighbor Lipschitz bound $\mu_j / \sigma_{\min}(\nabla^2_{\mathbf{u}_i} \mathcal{L}_i)$. Summing over at most $\kappa$ active neighbors gives~\eqref{eq:beta_bound}. The condition $\beta < 1$ holds whenever the MPC cost matrices $\mathbf{Q}$ and $\mathbf{R}$ provide sufficient regularization relative to the constraint coupling. The Banach fixed-point theorem then guarantees existence and uniqueness of the fixed point $\mathbf{u}^*$. For the asynchronous Gauss--Seidel iteration, convergence to the same fixed point follows from the classical theory of asynchronous iterations~\cite{bertsekas1989parallel}, since the scheme updates each drone exactly once per round with delay at most $N$.
    \end{proof}

    The following Lemma quantifies how distributed coupling perturbs the LQR approximation of Lemma~\ref{lem:mpc_approx}.

    \begin{lemma}[Perturbed LQR Approximation under Distributed Optimization]
        \label{lem:perturbed_lqr}
        At the fixed point $\mathbf{u}^*$ of the distributed iteration, each local MPC solution approximates the LQR feedback law with a coupling perturbation:
        \begin{equation}
            \mathbf{u}_i^* = -\mathbf{K}_p(\mathbf{p}_i - \bar{\mathbf{p}}_i) - \mathbf{K}_v \mathbf{v}_i + \eta_i,
        \end{equation}
        where the perturbation satisfies $\|\eta_i\| \leq C \cdot \beta$ with $C > 0$ depending on the constraint geometry.
    \end{lemma}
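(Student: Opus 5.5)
The argument compares, at the fixed point $\mathbf{u}^*$ of Lemma~\ref{lem:contraction}, the local solution of drone $i$ with the solution of a reference problem in which drone $i$ sees the \emph{centralized} (equivalently, unconstrained near the goal) configuration. The difference between the two is then bounded using the same KKT sensitivity estimate that produced~\eqref{eq:beta_bound}. Concretely, write $\mathbf{u}_i^* = \mathcal{T}_i(\{\hat{\mathbf{p}}_j^*\})$. Let $\mathbf{u}_i^{\mathrm{c}} = \mathcal{T}_i(\{\mathbf{p}_j^{\mathrm{c}}\})$ denote the response of drone $i$ when neighbour trajectories are replaced by those of the centralized solution, or by any reference in which the separation constraints are inactive. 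Then decompose
\begin{equation*}
\eta_i = \left(\mathbf{u}_i^* - \mathbf{u}_i^{\mathrm{c}}\right) + \left(\mathbf{u}_i^{\mathrm{c}} + \mathbf{K}_p(\mathbf{p}_i - \bar{\mathbf{p}}_i) + \mathbf{K}_v \mathbf{v}_i\right).
\end{equation*}

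\textbf{Second term.} Near the goal configuration the separation constraints of drone $i$ are inactive in the reference problem. The local problem~\eqref{eq:distributed_mpc} then coincides with the single-agent block of~\eqref{eq:centralized_mpc}, since the centralized cost is separable across drones once coupling is absent. Lemma~\ref{lem:mpc_approx} therefore applies directly, and this term is $\mathcal{O}(\epsilon)$, which we absorb into the constant (or take $\epsilon\to 0$).

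\textbf{First term.} This is the genuine coupling perturbation. Apply the parametric NLP sensitivity result~\cite{bonnans2000perturbation}, exactly as in the proof of Lemma~\ref{lem:contraction}. Differentiating the KKT system of drone $i$ with respect to the neighbour data $\hat{\mathbf{p}}_j$ gives
\begin{equation*}
\|\partial \mathbf{u}_i / \partial \hat{\mathbf{p}}_j\| \leq \mu_j\, \|\nabla_{\hat{\mathbf{p}}_j} g_{ij}\| / \sigma_{\min}(\nabla^2_{\mathbf{u}_i}\mathcal{L}_i),
\end{equation*}
where $g_{ij}$ is the separation constraint. Its gradient with respect to $\hat{\mathbf{p}}_j$ is a unit vector plus an adaptive-radius term bounded by $\alpha V_{\max}/U_{\max}$ through $\hat{\mathbf{v}}_j$. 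Summing over at most $\kappa$ active neighbours and integrating along the segment between $\{\hat{\mathbf{p}}_j^*\}$ and $\{\mathbf{p}_j^{\mathrm{c}}\}$ yields
\begin{equation*}
\|\mathbf{u}_i^*-\mathbf{u}_i^{\mathrm{c}}\| \leq \beta \cdot L_g \cdot D,
\end{equation*}
where $L_g = 1 + \alpha V_{\max}/U_{\max}$ and $D$ bounds the trajectory discrepancy. Because all trajectories lie in the bounded domain $\Omega$, we may take $D \leq \mathrm{diam}(\Omega)$ plus $H\Delta t\, V_{\max}$. Setting $C = L_g D$ plus the absorbed $\mathcal{O}(\epsilon)$ constant gives $\|\eta_i\|\leq C\beta$, with $C$ depending only on the constraint geometry.

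\textbf{Main obstacle.} The difficulty is justifying the sensitivity bound uniformly along the segment between the two neighbour configurations. The implicit function theorem needs LICQ and strict complementarity at every intermediate point, and the active set may change along the way. I would handle this by using the directional-differentiability and Lipschitz result for strongly regular solutions in~\cite{bonnans2000perturbation} instead. That result gives a Lipschitz constant without requiring a fixed active set, provided second-order sufficiency holds. Second-order sufficiency in turn follows from the positive definiteness of $\nabla^2_{\mathbf{u}_i}\mathcal{L}_i$ that was already assumed in Lemma~\ref{lem:contraction}. A secondary subtlety is that $\beta$ uses $\sup_j \mu_j$ while the multipliers vary along the path. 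Taking the supremum over the path, which remains bounded by compactness of the feasible set, keeps the estimate in the stated form.
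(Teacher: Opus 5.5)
Your proposal follows the paper's own argument---the local problem without active separation constraints reproduces the LQR law by Lemma~\ref{lem:mpc_approx}, and the coupling through the separation constraints is controlled by the sensitivity constant $\beta$ of Lemma~\ref{lem:contraction}---and your reference-configuration decomposition with $C = L_g D$ makes explicit the constant that the paper leaves implicit. Two points need tightening: first, an additive $\mathcal{O}(\epsilon)$ term cannot be absorbed into $C\beta$, since $\beta$ is zero when all multipliers vanish, so you should use the exact-coincidence clause of Lemma~\ref{lem:mpc_approx} for the constraint-free reference problem, as the paper does; second, instead of integrating a derivative bound along a segment (which is fragile under active-set changes and the nonconvex separation constraints, and gives only the upper bound on $\beta$ rather than $\beta$ itself), apply the global contraction inequality of Lemma~\ref{lem:contraction} directly to the bounded control discrepancy.
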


    \begin{proof}
        By Lemma~\ref{lem:mpc_approx}, the unconstrained local MPC solution has the linear feedback form $\mathbf{u}_i = -\mathbf{K}_p(\mathbf{p}_i - \bar{\mathbf{p}}_i) - \mathbf{K}_v \mathbf{v}_i$, since each drone's local cost function is identical to the per-agent component of the centralized cost. The perturbation $\eta_i$ arises from the coupling through the collision avoidance constraints: at the fixed point, the KKT conditions of the local problem include complementarity conditions involving the neighbor trajectories. By Lemma~\ref{lem:contraction}, the sensitivity of these conditions to neighbor perturbations is bounded by $\beta$, yielding the stated bound on $\|\eta_i\|$.
    \end{proof}

    \begin{theorem}[Stability of the Distributed Adaptive MPC]
        \label{thm:distributed_stability}
        Under the conditions of Theorem~\ref{thm:stability} ($\alpha < \alpha_{\text{c}}$, $N < N_{\text{c}}^{\text{v}}$, $H \geq H_{\min}$) and the additional coupling condition
        \begin{equation}
                    \beta < \beta_{\text{c}} = \frac{c_v}{\gamma_{\text{eff}} \cdot C},
        \end{equation}
        where $c_v$ is the velocity dissipation coefficient from Theorem~\ref{thm:stability}, $\gamma_{\text{eff}} = \gamma + \frac{\delta \alpha r_{\max}}{U_{\max}}$, and $C > 0$ is the constraint geometry constant from Lemma~\ref{lem:perturbed_lqr}, the distributed MPC system with asynchronous Gauss--Seidel updates is asymptotically stable about the goal configuration $\{(\bar{\mathbf{p}}_i, 0)\}_{i=1}^N$.
    \end{theorem}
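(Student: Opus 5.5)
The plan is to reuse the Lyapunov function $V$ and the derivative computation from the proof of Theorem~\ref{thm:stability}, treating the distributed closed loop as a perturbation of the centralized one. First, feasibility. Conditions $N < N_{\text{c}}^{\text{v}}$ and $H \geq H_{\min}$ are per-drone conditions and carry over unchanged. Lemma~\ref{lem:contraction} then guarantees that the asynchronous Gauss--Seidel iteration converges to a unique fixed point $\mathbf{u}^*$. Hence at every decision time the applied control is well defined and satisfies the local separation constraints against the neighbours' fixed-point trajectories. At the fixed point these coincide with the neighbours' actual planned trajectories, so pairwise separation holds.

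Second, substitute the perturbed law of Lemma~\ref{lem:perturbed_lqr}, $\mathbf{u}_i^* = -\mathbf{K}_p(\mathbf{p}_i - \bar{\mathbf{p}}_i) - \mathbf{K}_v \mathbf{v}_i + \eta_i$, into the expression
\[
\dot V = 2\sum_i \Bigl[(\mathbf{p}_i-\bar{\mathbf{p}}_i)^T\mathbf{v}_i + \bigl(\gamma + \tfrac{\delta\alpha r_i}{U_{\max}}\bigr)\mathbf{v}_i^T\mathbf{u}_i\Bigr].
\]
The nominal part is bounded exactly as in Theorem~\ref{thm:stability}. With the same choices $\tau_p = 1/(2\mathbf{K}_p)$ and $\tau_v = \mathbf{K}_v$, and using $\alpha<\alpha_{\text{c}}$, it gives $-c_p\|\mathbf{p}_i-\bar{\mathbf{p}}_i\|^2 - c_v\|\mathbf{v}_i\|^2$. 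What remains is the extra term $2(\gamma+\delta\alpha r_i/U_{\max})\mathbf{v}_i^T\eta_i$. Since $r_i\le r_{\max}$, its coefficient is bounded by $\gamma_{\text{eff}}$.

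Third, and this is the main obstacle, the perturbation must be absorbed into the velocity dissipation. A bound of the form $\|\eta_i\|\le C\beta$ is an additive constant. Taken literally, it only gives ultimate boundedness, not asymptotic stability. I would therefore first make the perturbation state-dependent. The multipliers $\mu_j$ vanish near the goal configuration, where the constraints are inactive, and by Lemma~\ref{lem:contraction} $\beta$ scales with $\sup_j\mu_j$. So I would argue, by re-reading Lemma~\ref{lem:perturbed_lqr} as a sensitivity bound, that $\|\eta_i\|\le C\beta\|\mathbf{v}_i\|$ in the relevant regime. The motivation is that the coupling enters through the velocity-dependent part of the radii and through motion relative to neighbours, which vanishes with $\mathbf{v}_i$. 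With that bound,
\[
2\gamma_{\text{eff}}\mathbf{v}_i^T\eta_i \le 2\gamma_{\text{eff}} C\beta\|\mathbf{v}_i\|^2,
\]
and the velocity coefficient becomes $c_v - \gamma_{\text{eff}}C\beta$, up to the harmless factor convention used for $c_v$. This is positive exactly when $\beta<\beta_{\text{c}} = c_v/(\gamma_{\text{eff}}C)$. If this linear-in-$\|\mathbf{v}_i\|$ bound cannot be justified, the fallback is Young's inequality, $\mathbf{v}_i^T\eta_i\le \tfrac{\tau}{2}\|\mathbf{v}_i\|^2+\tfrac{1}{2\tau}C^2\beta^2$. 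Combined with $\beta\to 0$ near equilibrium, this gives practical stability with a residual ball that shrinks to the goal as the constraints become inactive.

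Finally, conclude as in Theorem~\ref{thm:stability}. With $c_p>0$ and $c_v' = c_v-\gamma_{\text{eff}}C\beta>0$, use the leading-order estimate $V\le\sum_i[\|\mathbf{p}_i-\bar{\mathbf{p}}_i\|^2 + (\gamma+\delta r_{\min}\alpha/U_{\max})\|\mathbf{v}_i\|^2]$. This yields
\[
\dot V\le-\lambda'_{\min}V, \qquad \lambda'_{\min}=\min\{c_p,\;c_v'/(\gamma+\delta r_{\min}\alpha/U_{\max})\}>0,
\]
which implies exponential, hence asymptotic, convergence to $\{(\bar{\mathbf{p}}_i,0)\}_{i=1}^N$. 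One last point is that the asynchronous ordering is handled entirely by the fixed-point property. The applied control is always $\mathbf{u}^*$ regardless of the permutation $\sigma^{(m)}$, so no additional delay analysis is needed beyond Lemma~\ref{lem:contraction}.
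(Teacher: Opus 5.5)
\paragraph{Main route.} Your main argument rests on upgrading Lemma~\ref{lem:perturbed_lqr} to $\|\eta_i\|\le C\beta\|\mathbf{v}_i\|$. That is not a re-reading of the lemma but a new hypothesis, and it fails exactly where the coupling matters.

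\begin{itemize}
\item The perturbation $\eta_i$ comes from the multipliers $\mu_j$ of active separation constraints. These are driven by the $2r_{\min}$ part of the separation requirement and by the neighbors' predicted motion. Neither vanishes with $\mathbf{v}_i$.
\item Concretely, take a drone hovering while it yields to a neighbor. It has $\mathbf{v}_i=0$ and $\mathbf{u}_i^*=0$, hence $\eta_i=\mathbf{K}_p(\mathbf{p}_i-\bar{\mathbf{p}}_i)\neq 0$, whereas your bound forces $\eta_i=0$.
\item The bound holds trivially only where all constraints are inactive. But there $\mu_j=0$ gives $\eta_i\equiv 0$, the distributed loop coincides with the centralized one, and $\beta<\beta_{\text{c}}$ plays no role.
\end{itemize}

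So the reduced velocity coefficient $c_v-\gamma_{\text{eff}}C\beta$, its positivity exactly when $\beta<\beta_{\text{c}}$, and the exponential rate $\lambda'_{\min}$ are artifacts of the assumed bound. They are not consequences of Lemmas~\ref{lem:contraction} and~\ref{lem:perturbed_lqr}.

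\paragraph{Fallback.} Your feasibility step and your treatment of the asynchronous ordering through the fixed point match the paper. Your fallback is essentially the paper's stability argument. The paper keeps the additive bound and uses Cauchy--Schwarz to obtain $\dot V_{\text{dist}}\le\sum_i\|\mathbf{v}_i\|\bigl(-c_v\|\mathbf{v}_i\|+\gamma_{\text{eff}}C\beta\bigr)$. It concludes that $\dot V_{\text{dist}}<0$ outside a neighborhood of the goal whose radius is proportional to $\beta$. It then passes to asymptotic stability by invoking $\beta\to 0$ as the constraints deactivate near the goal.

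Your diagnosis that an additive bound alone yields only ultimate boundedness applies to that argument too. To turn the fallback into a proof of asymptotic stability, one ingredient is missing. You must show that the residual neighborhood lies inside the region around the goal where every separation and boundary constraint is inactive. This requires:
\begin{itemize}
\item the goals to be pairwise separated by more than $2r_{\min}$, and from $\partial\Omega$ by more than $r_{\min}$, with some slack;
\item the residual radius, which is proportional to $\beta$, to be smaller than that slack.
\end{itemize}
Once the state enters that region, $\eta_i\equiv 0$ and Theorem~\ref{thm:stability} applies unchanged.
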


    \begin{proof}
        We use the same Lyapunov function $V(\mathbf{x})$ from Theorem~\ref{thm:stability}. Substituting the perturbed control law from Lemma~\ref{lem:perturbed_lqr} into the time derivative yields
        \begin{equation}
            \dot{V}_{\text{dist}} = \dot{V}_{\text{central}} + 2\sum_{i=1}^{N} \gamma_{\text{eff}} \, \mathbf{v}_i^T \eta_i,
        \end{equation}
        where $\dot{V}_{\text{central}}$ is the derivative under the centralized control from Theorem~\ref{thm:stability} and the second term is the coupling error from the distributed optimization.

        Applying the Cauchy--Schwarz inequality:
        \begin{equation}
            \left|\sum_{i=1}^{N} \gamma_{\text{eff}} \, \mathbf{v}_i^T \eta_i\right| \leq \gamma_{\text{eff}} \, C \, \beta \sum_{i=1}^{N} \|\mathbf{v}_i\|.
        \end{equation}
        From the centralized analysis, $\dot{V}_{\text{central}}$ contains a negative-definite term $-c_v \sum_i \|\mathbf{v}_i\|^2$ arising from the velocity damping through $\mathbf{K}_v$. Combining both contributions:
        \begin{align}
            \dot{V}_{\text{dist}} &\leq -c_v \sum_{i=1}^{N} \|\mathbf{v}_i\|^2 + \gamma_{\text{eff}} \, C \, \beta \sum_{i=1}^{N} \|\mathbf{v}_i\| \nonumber\\
            &= \sum_{i=1}^{N} \|\mathbf{v}_i\|\left(-c_v \|\mathbf{v}_i\| + \gamma_{\text{eff}} \, C \, \beta\right).
        \end{align}
        Each summand is negative when $\|\mathbf{v}_i\| > \gamma_{\text{eff}} C \beta / c_v$. For the regime near equilibrium in which velocities are small, the position damping $-c_p \sum_i \|\mathbf{p}_i - \bar{\mathbf{p}}_i\|^2$ from the centralized analysis dominates. Therefore $\dot{V}_{\text{dist}} < 0$ outside a neighborhood of the equilibrium whose radius is proportional to $\beta$. Since $\beta \to 0$ as the inter-drone distances grow, asymptotic stability holds when
        \begin{equation}
            \beta < \beta_{\text{c}} = \frac{c_v}{\gamma_{\text{eff}} \cdot C}.
        \end{equation}
        This argument follows the standard framework for stability of perturbed MPC systems~\cite{MAYNE2000789}, combined with the contraction-based analysis of distributed optimization schemes~\cite{ROSTAMI2023100881}.
    \end{proof}

    The bound~\eqref{eq:beta_bound} is verifiable from the MPC cost matrices and KKT solution. In practice, $\beta$ is small when drone density is low ($\kappa$ small), constraints are weakly active ($\mu_j$ small), or cost regularization is strong ($\sigma_{\min}$ large). Hence $N_{\text{c}}^{\text{v}}$~\eqref{eq:n_crit_var} serves as a sufficient condition for both centralized stability and distributed convergence.


    \section{Evaluation}\label{sec:evaluation}

    Simulation \footnote{Source code and configurations are available at \url{https://github.com/linda78/TLCAmpc/releases/tag/v2.2.1}.} bridges the gap between the idealized assumptions of Section~\ref{sec:theory} and practical applicability by exposing effects such as discrete time steps, finite solver iterations, and transient radius expansion during acceleration.

    We validate the theoretical predictions through two experiments. The first (Section~\ref{subsec:eval_throughput}) targets the throughput predictions by placing drones in constrained passage scenarios. The second (Section~\ref{subsec:eval_scaling}) addresses the capacity and computational scaling claims by varying the number of drones. All experiments share the physical parameters summarized in Table~\ref{tab:parameters}. The static radius equals the worst-case adaptive radius at maximum velocity: $r_{\text{s}} = r_{\min} + \alpha V_{\max}^2/(2U_{\max})$. The velocity and acceleration limits match~\cite{drones10020139}, and $\alpha = 0.5$ lies below the critical threshold $\alpha_{\text{c}} = 0.96$ (Theorem~\ref{thm:stability}).

    Table~\ref{tab:parameters} also lists three capacity bounds for the scaling experiment. The static bound $N_{\text{c}}^{\text{s}} = 10$ uses the dynamically inflated radius from~\cite{drones10020139}. The throughput-optimal operating point $N_{\text{opt}} = 18$ is evaluated at $r(v_{\text{a}}^*) = \tfrac{6}{5}\,r_{\min}$ from~\eqref{eq:n_opt}, maximizing the product of drone count and velocity. Below this bound, the MPC solver converges quickly; beyond it, drones must operate below the throughput-optimal speed, and computation times grow rapidly. The stability limit $N_{\text{c}}^{\text{v}} = 25$ corresponds to the adaptive packing bound~\eqref{eq:n_crit_var} at $r = r_{\min} + \varepsilon$. Below this limit, asymptotic stability is guaranteed (Theorem~\ref{thm:stability}), but exponential computation time growth restricts the practically achievable drone count well below this theoretical bound.

    \begin{table}[ht!]
        \centering
        \caption{Simulation parameters and derived capacity bounds shared by both experiments.}
        \small
        \renewcommand{\arraystretch}{1.15}
        \begin{tabular}{l l l}
            \hline
            \textbf{Parameter}            & \textbf{Symbol}           & \textbf{Value}        \\
            \hline
            Maximum velocity              & $V_{\max}$                & $2.5\,\mathrm{m/s}$   \\
            Maximum acceleration          & $U_{\max}$                & $3.0\,\mathrm{m/s}^2$ \\
            Minimum safety radius         & $r_{\min}$                & $1.0\,\mathrm{m}$     \\
            Static safety radius          & $r_{\text{s}}$            & $1.52\,\mathrm{m}$    \\
            Adaptation parameter          & $\alpha$                  & $0.5$                 \\
            \hline
            \multicolumn{3}{l}{\textit{Derived capacity bounds (Experiment~2)}} \\
            Static (dynamic packing)      & $N_{\text{c}}^{\text{s}}$ & $10$                  \\
            Adaptive (throughput-optimal) & $N_{\text{opt}}$          & $18$                  \\
            Adaptive (stability limit)    & $N_{\text{c}}^{\text{v}}$ & $25$                  \\
            \hline
        \end{tabular}
        \label{tab:parameters}
    \end{table}

    Both experiments compare four configurations combining two safety radius strategies and two MPC architectures in a $2 \times 2$ factorial design (Table~\ref{tab:configurations}). The static radius equals the maximum-velocity adaptive radius, ensuring the adaptive variant can only be less restrictive.

    \begin{table}[ht!]
        \centering
        \caption{Factorial design: safety radius strategy $\times$ MPC architecture.}
        \small
        \renewcommand{\arraystretch}{1.15}
        \begin{tabular}{c l l}
            \hline
            \textbf{Config.} & \textbf{Safety Radius}   & \textbf{MPC Architecture}   \\
            \hline
            A                & adaptive $r(\mathbf{v})$ & centralized                 \\
            B                & adaptive $r(\mathbf{v})$ & distributed (Gauss--Seidel) \\
            C                & static $r_{\text{s}}$    & centralized                 \\
            D                & static $r_{\text{s}}$    & distributed (Gauss--Seidel) \\
            \hline
        \end{tabular}
        \label{tab:configurations}
    \end{table}


    \subsection{Throughput in Constrained Passages}\label{subsec:eval_throughput}

    This experiment validates the throughput analysis of Section~\ref{subsec:throughput} using two passage geometries that correspond to the exclusive and marginal regimes of Theorem~\ref{thm:throughput}.

    \subsubsection*{Numerical Predictions}

    With the parameters from Table~\ref{tab:parameters}, the optimal adaptive crossing speed is $v_{\text{a}}^{*} \approx 1.55\,\text{m/s}$ at radius $r(v_{\text{a}}^{*}) = 1.2\,\text{m}$. A passage with $|S| = 5.0\,\text{m}^2$ falls in the exclusive regime: it is impassable with static spheres ($\pi\,r_{\text{s}}^2 \approx 7.26\,\text{m}^2 > |S|$) but navigable with adaptive spheres at reduced speed. In the marginal regime, the throughput gain for a wider passage ($|S| = 8.0\,\text{m}^2$) evaluates to $G \approx 1.26$, i.e., approximately $26\,\%$ higher throughput with adaptive spheres.

    \subsubsection*{Setup}

    \paragraph{Exclusive passage}
    A wall with a single square hole opening of side length $\ell = 2.4\,\mathrm{m}$ is placed in the airspace. Since $2r_{\min} = 2.0\,\mathrm{m} \leq \ell$ but $2r_{\text{s}} = 3.04\,\mathrm{m} > \ell$,
    the opening lies in the exclusive regime: static drones ($r_{\text{s}} = 1.52\,\mathrm{m}$) cannot fit through, while adaptive drones ($r_{\min} = 1.0\,\mathrm{m}$) can pass by reducing speed.

    \paragraph{Narrow tunnel}
    Five drones traverse a square tunnel (side $3.65\,\mathrm{m}$). One static drone fits ($2 r_{\text{s}} = 3.04\,\mathrm{m} < 3.65\,\mathrm{m}$) but two do not ($4 r_{\text{s}} = 6.08\,\mathrm{m} > 3.65\,\mathrm{m}$), forcing sequential passage. Adaptive radii allow tighter spacing when drones slow down.

    \subsubsection*{Results}

    \paragraph{Exclusive passage}
    \begin{figure}[ht!]
        \centering
        \begin{minipage}[t]{0.48\linewidth}
            \centering
            \includegraphics[width=\linewidth]{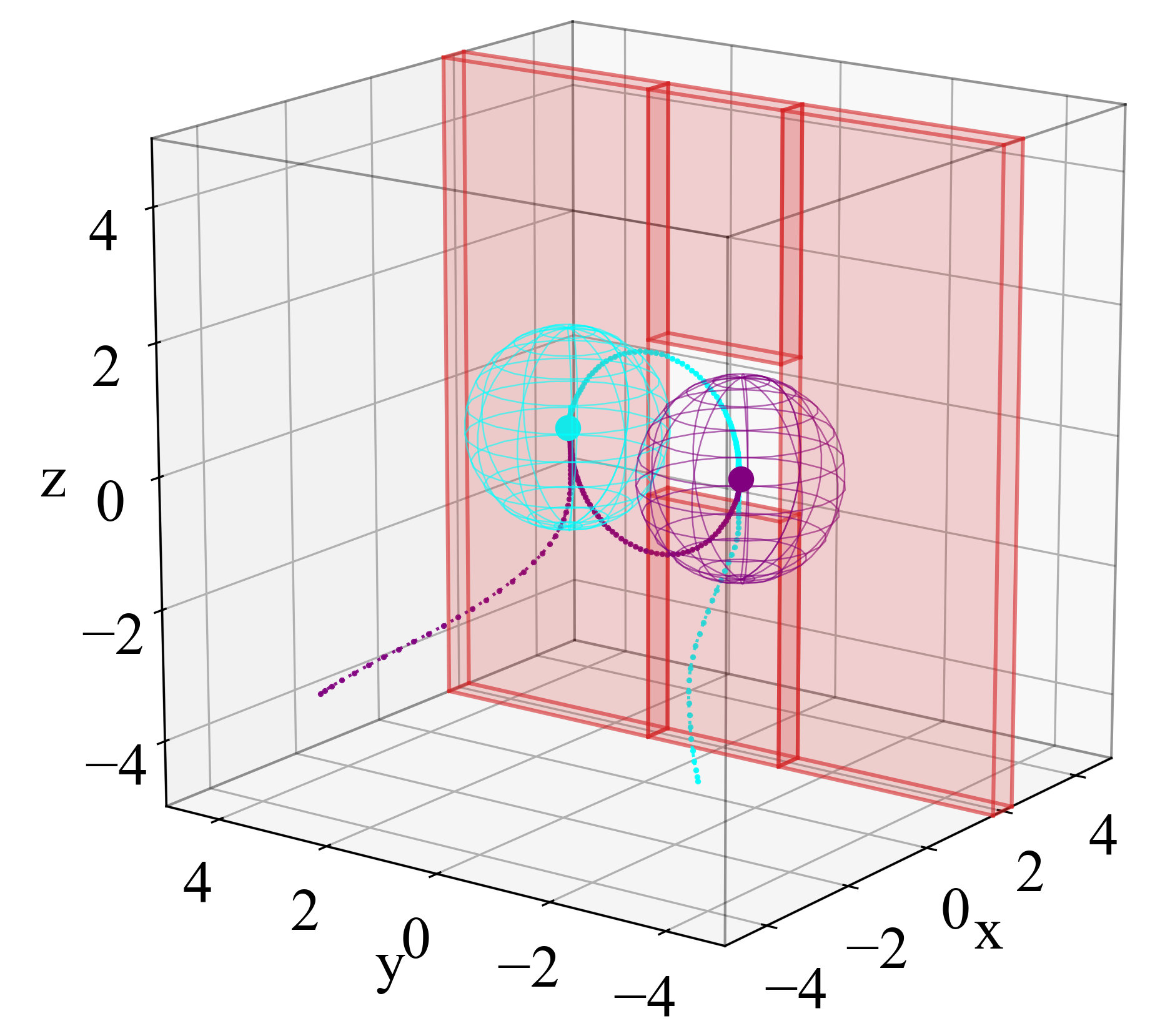}
            \caption{Static safety sphere ($r_{\text{s}} = 1.52\,\mathrm{m}$) at a passage with radius $1.2\,\mathrm{m}$. The drone cannot fit through the opening.}
            \label{fig:static_exclusive}
        \end{minipage}
        \hfill
        \begin{minipage}[t]{0.48\linewidth}
            \centering
            \includegraphics[width=\linewidth]{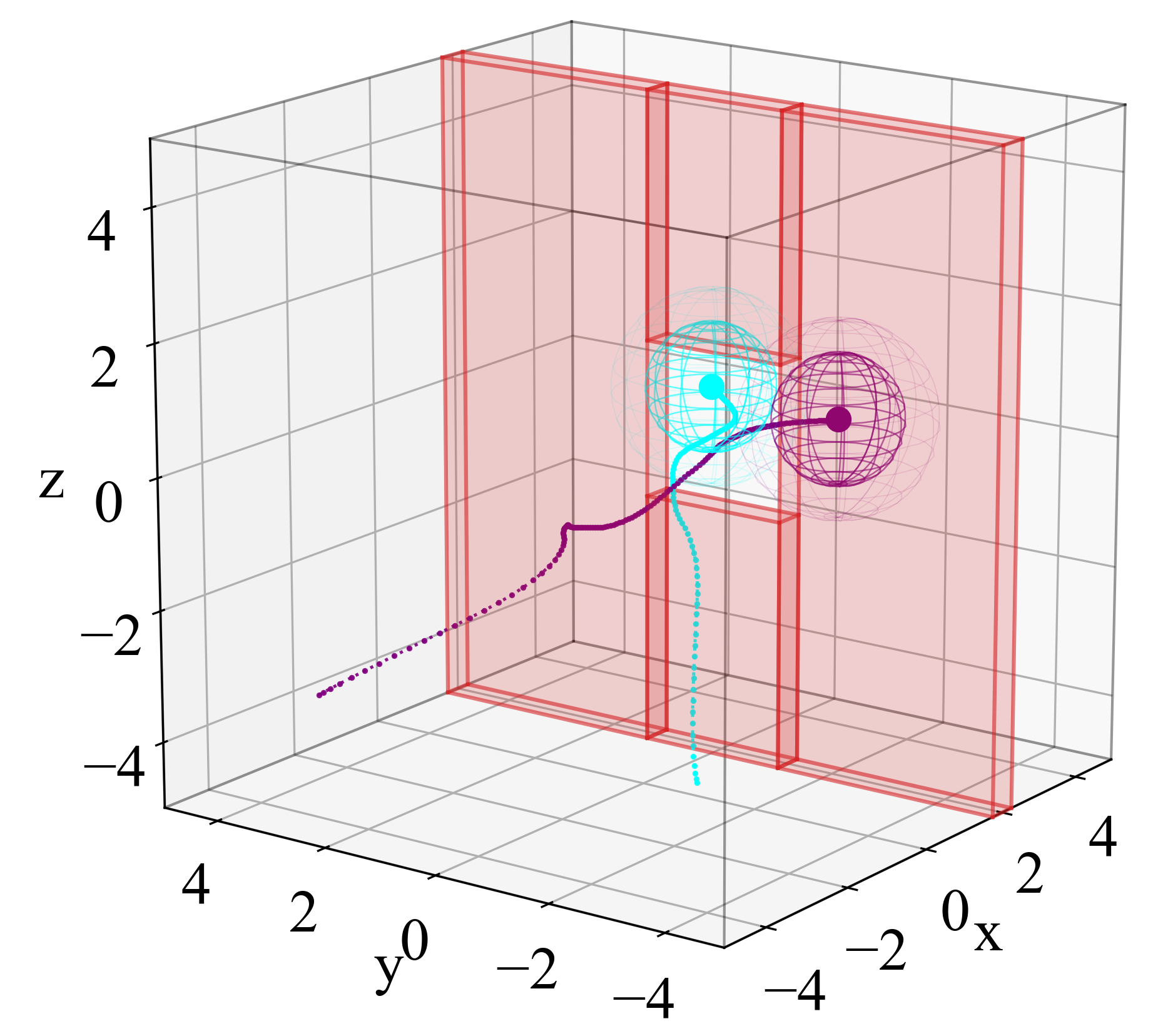}
            \caption{Adaptive safety sphere ($r_{\min} = 1.0\,\mathrm{m}$) at the same passage. The drone reduces its speed, shrinks its safety radius, and traverses the $1.2\,m$ opening.}
            \label{fig:adaptive_exclusive}
        \end{minipage}
    \end{figure}
    Figure~\ref{fig:static_exclusive} confirms that a drone with a static safety sphere cannot traverse the passage. In contrast, Figure~\ref{fig:adaptive_exclusive} shows that the adaptive drone reduces its speed and shrinks its safety radius below the opening radius, successfully passing through. In this scenario, the DMPC with adaptive spheres (\cfg{B}) outperformed the centralized variant (\cfg{A}), suggesting that local optimization effectively exploits adaptive radius reduction near obstacles.

    \paragraph{Narrow tunnel}
    \begin{figure}[ht!]
        \centering
        \begin{minipage}[t]{0.48\linewidth}
            \centering
            \includegraphics[width=\linewidth]{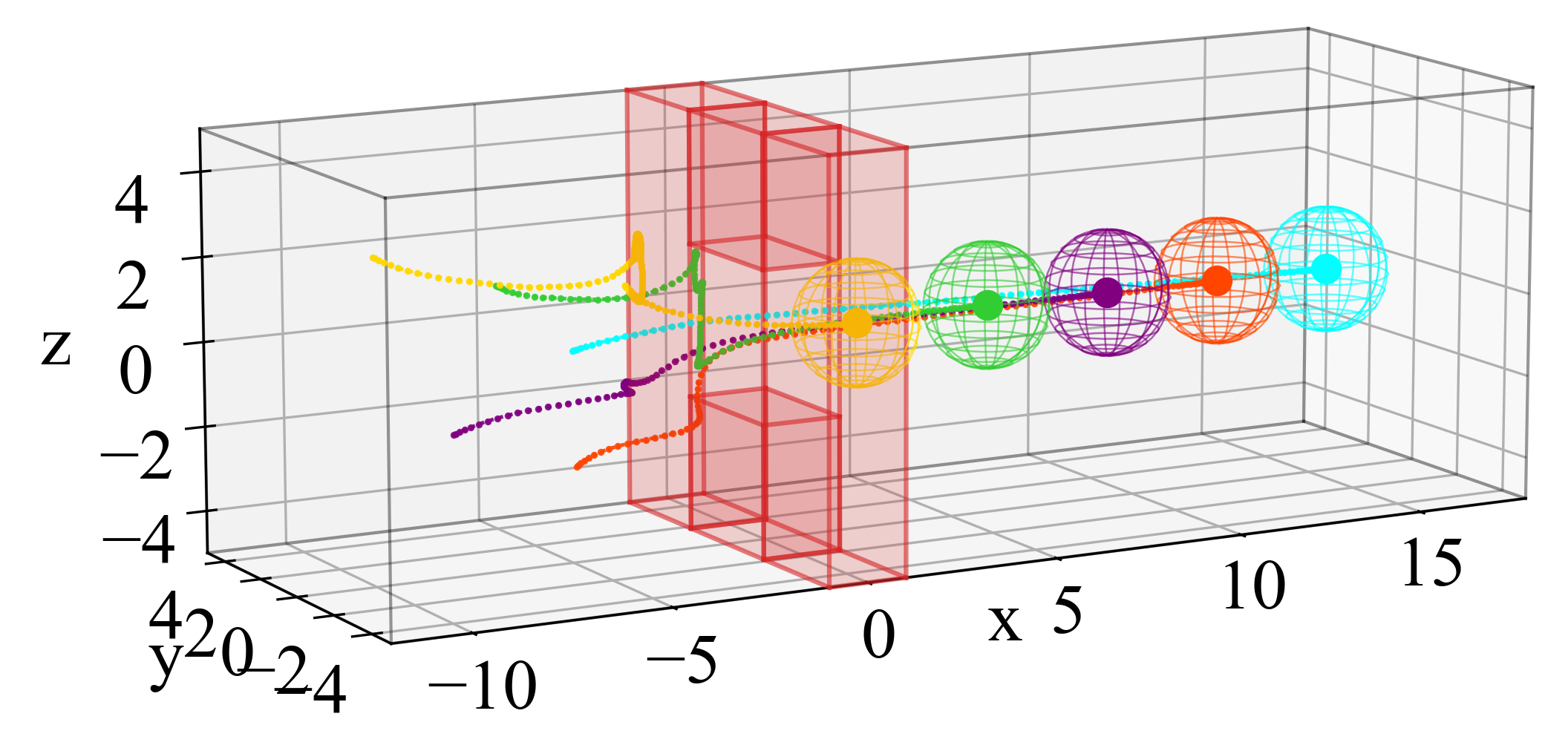}
            \caption{Five drones traversing a square tunnel (side length $3.65\,\mathrm{m}$) with static safety radii. Total traversal time: ${\approx}\,12\,\mathrm{s}$.}
            \label{fig:static_narrow}
        \end{minipage}
        \hfill
        \begin{minipage}[t]{0.48\linewidth}
            \centering
            \includegraphics[width=\linewidth]{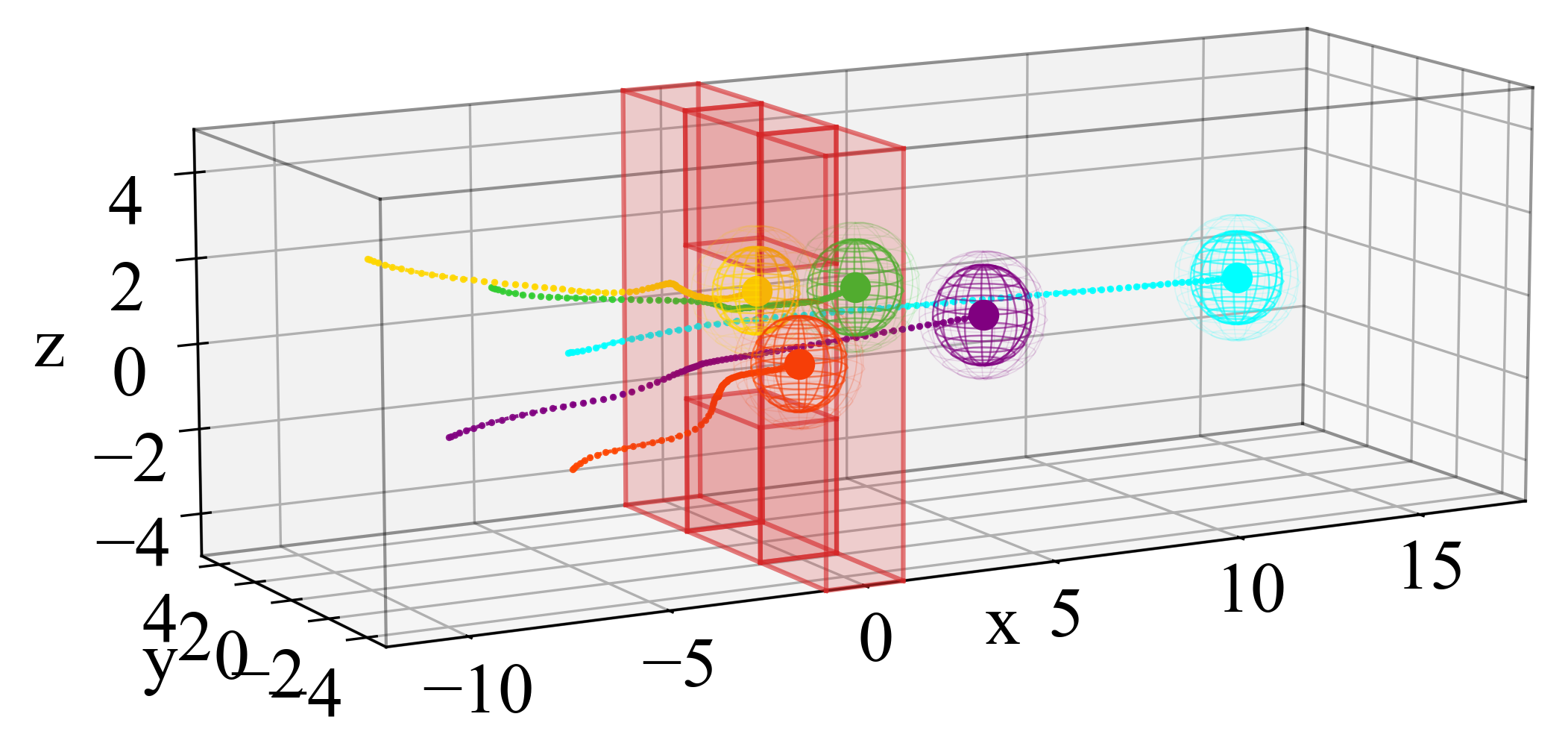}
            \caption{Same scenario with adaptive safety zones. The reduced radii allow tighter spacing and faster throughput. Total traversal time: ${\approx}\,9\,\mathrm{s}$.}
            \label{fig:adaptive_narrow}
        \end{minipage}
    \end{figure}
    With static safety radii, all five drones require approximately $12$~seconds to traverse (Figure~\ref{fig:static_narrow}). With adaptive safety zones, the drones slow down when approaching the tunnel, shrinking their safety radii and allowing tighter spacing. This reduces the total traversal time to about nine seconds (Figure~\ref{fig:adaptive_narrow}), a reduction of $25\,\%$.

    \paragraph{Summary.}
    Both scenarios confirm Theorem~\ref{thm:throughput}: adaptive spheres enable passage through openings impassable with static radii (exclusive regime) and yield measurable throughput gains in constrained spaces (marginal regime).


    \subsection{Density Scaling and Computational Cost}\label{subsec:eval_scaling}

    This experiment evaluates capacity bounds and computational scaling as drone count increases, addressing two questions:
    \begin{enumerate}
        \item[\textbf{Q1}] At what drone count does feasibility break down for static versus adaptive configurations?
        \item[\textbf{Q2}] How does computational cost scale with drone count, and how do centralized and DMPC compare?
    \end{enumerate}

    \subsubsection*{Setup}

    All runs use a cubic airspace $\Omega = [0,8]^3\,\mathrm{m}^3$ with randomly selected start and goal positions satisfying~\eqref{eq:pairwise_separation}. Identical random seeds ensure the same initial conditions across configurations. We vary $N \in \{2, \ldots, N_{\text{c}}^{\text{v}}\}$, repeating each combination $20$ times ($4 \times 25 \times 20 = 2\,000$ runs).

    We record mean step time (wall-clock per MPC solve), total steps to completion, and feasibility rate (fraction of converging runs). For distributed configurations~\cfg{(B, D)}, the number of Gauss--Seidel iterations per step is also recorded.

    At the first time step, each drone is initialized with a constant trajectory $\hat{\mathbf{p}}_i^{(0)}(k+h) = \mathbf{p}_i(k)$ plus small symmetry-breaking perturbations; subsequent steps use warm-started solutions shifted forward by one step.

    \subsubsection*{Results}

    \paragraph{Feasibility and Deadlock Rate (Q1)}

    \begin{figure}[ht!]
        \centering
        \includegraphics[width=0.9\linewidth]{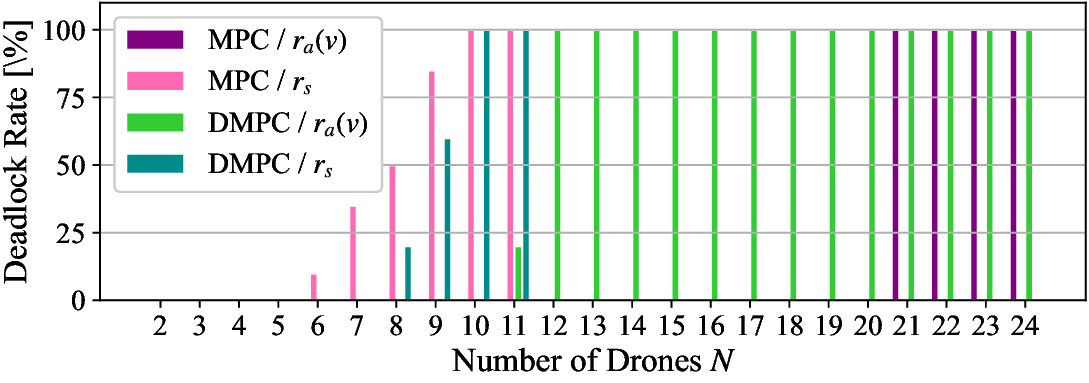}
        \caption{Deadlock rate as a function of drone count~$N$ for all four configurations. Static configurations fail progressively and reach $100\,\%$ deadlock at $N = 10$; adaptive configurations extend the feasible range, with~\cfg{A} operating up to $N = 20$. Static spheres cannot be placed without overlap for $N \geq 11$.}
        \label{fig:deadlock_rate}
    \end{figure}

    Figure~\ref{fig:deadlock_rate} shows the fraction of scenarios that terminate in deadlock or timeout for each configuration as a function of~$N$. The four configurations separate clearly.

    The centralized static variant~\cfg{C} fails first, with deadlocks from $N = 6$. The distributed static variant~\cfg{D} follows at $N = 8$. At $N = 10$, both static configurations reach $100\,\%$ failure, confirming the theoretical limit $N_{\text{c}}^{\text{s}} = 10$.

    The adaptive configurations extend the feasible range substantially. The centralized adaptive variant~\cfg{A} resolves all scenarios up to $N = 20$, beyond which every run times out. This lies between $N_{\text{opt}} = 18$ and $N_{\text{c}}^{\text{v}} = 25$, consistent with theory. The bound $N_{\text{opt}}$ assumes all drones operate at the throughput-maximizing speed $v_{\text{a}}^*$. In practice, the centralized solver exploits heterogeneous velocity profiles — slowing individual drones to resolve local conflicts while others continue at higher speeds — accommodating more drones than $N_{\text{opt}}$ at the cost of reduced per-drone throughput and increasing computation time. Feasibility breaks down at $N = 21$ rather than $N_{\text{c}}^{\text{v}} = 25$ due to computational limitations: exponential solver time growth (Figure~\ref{fig:scaling_mpc}) causes runs to exceed the timeout before convergence, not stability failure.

    The distributed adaptive variant~\cfg{B} fails earlier: at $N = 11$, approximately $20\,\%$ of runs exceed the step limit or time out, and no scenario with $N \geq 12$ is solved. This earlier breakdown reflects the coordination overhead of Gauss--Seidel iteration: without global state access, iterative trajectory updates require more rounds to converge, exhausting the computation budget at lower drone counts.

    These results confirm three predictions. First, adaptive spheres extend the feasible range by a factor of~$2$ over static radii, as predicted by the capacity analysis. Second, the achievable drone count exceeds $N_{\text{opt}} = 18$ but remains below $N_{\text{c}}^{\text{v}} = 25$, confirming that $N_{\text{opt}}$ is a conservative but practically meaningful bound while $N_{\text{c}}^{\text{v}}$ correctly delimits the stability regime. Third, although centralized and DMPC share identical theoretical guarantees (Theorems~\ref{thm:stability} and~\ref{thm:distributed_stability}), the centralized architecture realizes a substantially larger fraction of the theoretical capacity due to more efficient use of global state information.

    \paragraph{Computational Cost (Q2)}

    \begin{figure}[ht!]
        \centering
        \includegraphics[width=0.9\linewidth]{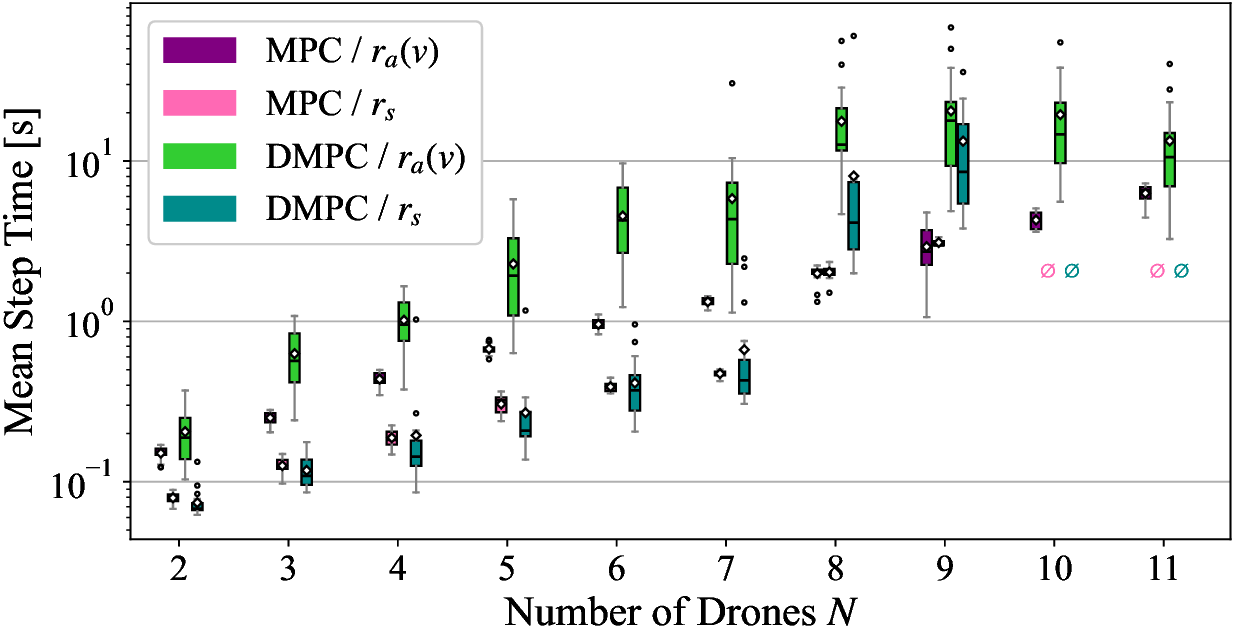}
        \caption{Mean step time (logarithmic scale) as a function of drone count~$N$ for all four configurations. All configurations exhibit exponential growth; adaptive safety zones and the distributed architecture incur the highest computational cost.}
        \label{fig:mean_step_time_all}
    \end{figure}

    Figure~\ref{fig:mean_step_time_all} shows the mean step time on a logarithmic scale as a function of~$N$. All four configurations transition to an approximately linear trend in the log domain beyond $N \approx 5$, i.e., once more than half of the static capacity is utilized. This confirms that the mean computation time per step grows exponentially with the number of drones.

    The distributed adaptive variant~\cfg{B} stands out with the steepest slope, indicating the worst scaling behavior. The combination of iterative Gauss--Seidel coordination and nonlinear, velocity-dependent separation constraints compounds the per-step cost, making this configuration the most computationally demanding. The centralized architectures~\cfg{(A, C)} scale more favorably, resolving all pairwise constraints simultaneously with full state access.

    A second notable observation is the substantially larger variance of the adaptive configurations~\cfg{(A, B)} compared to the static ones~\cfg{(C, D)}. With static safety radii, computation time is largely determined by drone count and varies little across initial placements. With adaptive radii, the initial constellation strongly influences computation: unfavorable start-goal assignments force prolonged deceleration and repeated conflict resolution, producing step times significantly above favorable cases.

    Exponential computation time growth is the primary practical limitation of both architectures. Figures~\ref{fig:scaling_mpc} and~\ref{fig:scaling_dmpc} isolate the scaling of centralized and distributed adaptive variants.

    \begin{figure}[ht!]
        \centering
        \begin{minipage}[t]{0.48\linewidth}
            \centering
            \includegraphics[width=\linewidth]{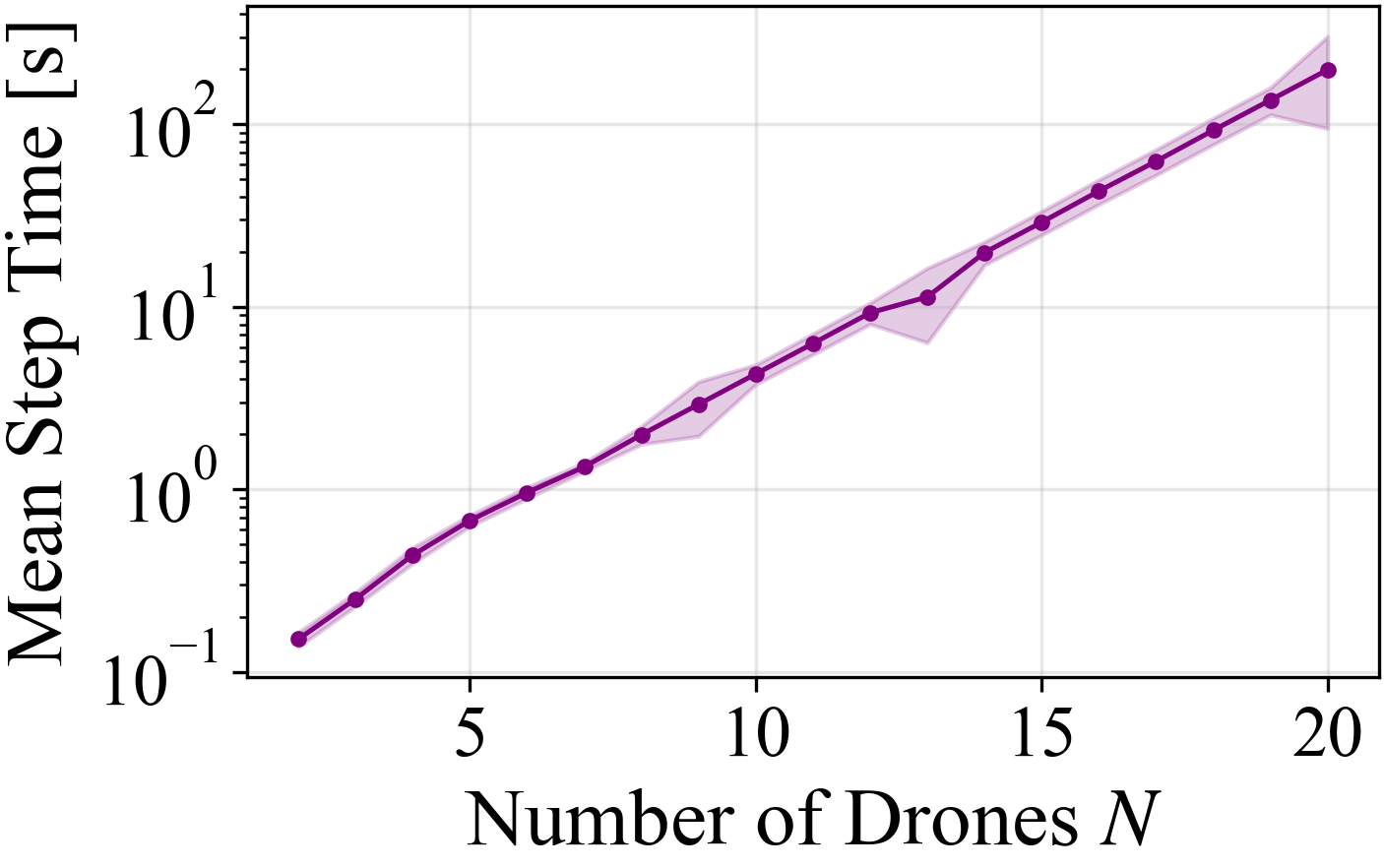}
            \caption{Computation time per step (logarithmic scale) for centralized adaptive~\cfg{A} over $N \in \{2,\ldots,20\}$.
            }
            \label{fig:scaling_mpc}
        \end{minipage}
        \hfill
        \begin{minipage}[t]{0.48\linewidth}
            \centering
            \includegraphics[width=\linewidth]{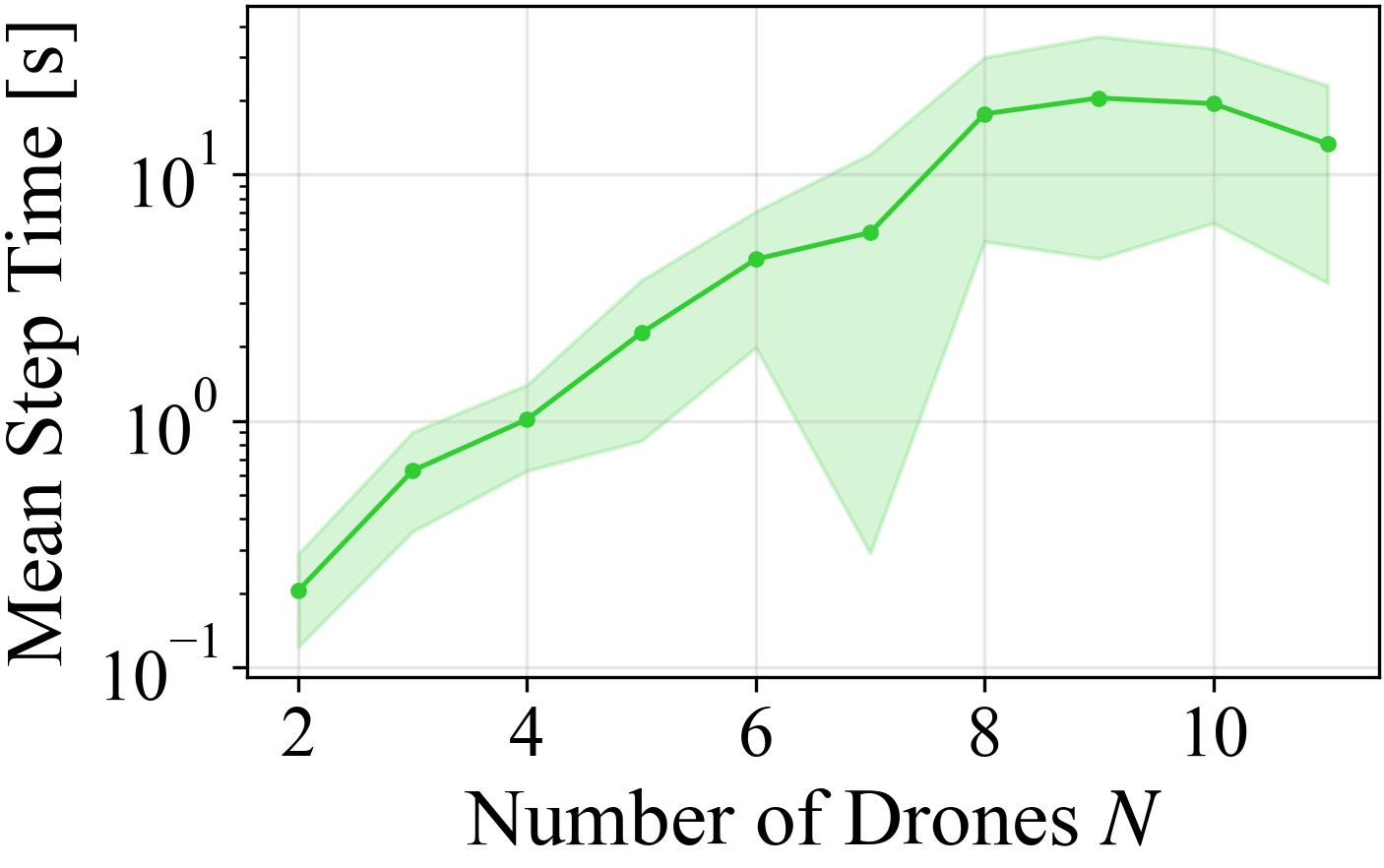}
            \caption{Computation time per step (logarithmic scale) for distributed adaptive~\cfg{B} over $N \in \{2,\ldots,11\}$.}
            \label{fig:scaling_dmpc}
        \end{minipage}
    \end{figure}

    The centralized variant~\cfg{A} (Figure~\ref{fig:scaling_mpc}) exhibits exponential growth but maintains relatively stable variance across runs, indicating consistent handling of different initial placements. In contrast, the distributed variant~\cfg{B} (Figure~\ref{fig:scaling_dmpc}) shows not only exponential growth but also sharply increasing variance with~$N$. This high variance reflects the sensitivity of the Gauss--Seidel iteration to the initial drone constellation: in unfavorable configurations, the iterative coordination requires substantially more rounds to converge, amplifying the spread of computation times.

    The per-step cost of~\cfg{B} becomes prohibitive beyond $N = 11$, causing runs to exceed the timeout before convergence.
    \cfg{A} scales more favorably but also reaches its limit beyond $N = 20$.


    \section{Discussion and Conclusion}\label{sec:conclusion}

    \subsection{Discussion of Results}

    The evaluation confirms the core theoretical predictions. All four configurations operate reliably in the low-density regime ($N \ll 10$), with comparable step counts and only moderate differences in computation time. As drone count increases, static configurations fail first: centralized static~\cfg{C} at $N = 6$, distributed static~\cfg{D} at $N = 8$, both reaching $100\,\%$ deadlock at $N = 10$ and confirming the theoretical limit $N_{\text{c}}^{\text{s}} = 10$. The mathematical safety bounds are respected throughout; no constraint violations occur in any configuration.

    Beyond $N = 10$, only adaptive configurations remain operational. The centralized solver~\cfg{A} resolves scenarios up to $N = 20$ — twice the static capacity. At the throughput-optimal operating point, $N_{\text{opt}} = 18$, the improvement factor over $N_{\text{c}}^{\text{s}} = 10$ is $1.8$. Both values lie below the theoretical maximum $N_{\text{c}}^{\text{v}} / N_{\text{c}}^{\text{s}} = 2.5\times$, which represents the stability limit under ideal conditions. This gap is not due to stability failure but to computational limitations: exponential solver time growth (Figure~\ref{fig:mean_step_time_all}) causes runs to exceed the timeout before convergence. Since not all drones fly at $v_{\text{a}}^*$ simultaneously, the centralized solver exploits heterogeneous velocity profiles, explaining why the achieved $N = 20$ exceeds the conservative planning quantity $N_{\text{opt}} = 18$.

    The distributed adaptive variant~\cfg{B} is more severely affected by computational scaling. Each drone optimizes only its own trajectory based on estimated neighbor trajectories, without access to the full global state. The sequential Gauss--Seidel updates require substantially more iterations to achieve consistency, especially in dense scenarios with strong inter-agent coupling, causing breakdown at $N = 12$. Currently, all drones are optimized within a single process; distributing computation across separate processes could alleviate this bottleneck. However, convergence is inherently slower in the distributed architecture because each agent works with approximations rather than exact neighbor trajectories. Despite this performance gap, the distributed architecture is a more realistic deployment model for mixed-fleet operations, and our stability analysis (Theorem~\ref{thm:distributed_stability}) confirms it is provably safe under the same conditions as the centralized variant.

    Throughput experiments further demonstrate the practical value of adaptive safety zones. In the exclusive passage regime, adaptive drones traverse openings geometrically impassable to static drones by reducing speed and shrinking their safety zones. In the narrow tunnel scenario, adaptive spheres reduce total traversal time by $25\,\%$. These results confirm that adaptive radii both increase collision-free drone capacity and enable passage through spatial constraints that static radii cannot accommodate. However, these benefits come at significant computational cost: nonlinear, velocity-dependent separation constraints increase per-step solver time, most pronouncedly when the MPC horizon is short, forcing a reactive slow-down strategy instead of proactive conflict avoidance.

    \subsection{Limitations and Future Work}

    The most significant limitation is the exponential computational cost increase with drone count in the adaptive formulation, especially in the distributed architecture where iterative Gauss--Seidel coordination compounds per-step overhead. One promising mitigation is improved trajectory prediction. The reactive behavior at short horizons indicates that the optimizer would benefit substantially from better estimates of future neighbor trajectories. If each drone could predict neighbor locations rather than treating last known positions as static, the optimizer could anticipate conflicts earlier and plan smoother maneuvers with fewer iterations. Learning-based methods such as LSTM networks trained on historical flight data could generate these predictions in real time, serving as warm-start information or soft constraints within the MPC formulation.

    A second strategy is a hybrid control scheme that activates the adaptive formulation only when needed. Since static and adaptive configurations perform similarly at low density, a mode-switching approach would use static radii under normal conditions and activate adaptive radii when local density exceeds a threshold — for instance, near bottlenecks or during tight last-mile delivery maneuvers. This would combine the computational efficiency of static radii with the feasibility guarantees of the adaptive approach, concentrating additional solver effort where capacity gain is needed.

    Beyond computational improvements, alternative optimization paradigms could enhance solution quality. A game-theoretic formulation in which each drone optimizes its trajectory in anticipation of others' responses could improve coordination beyond what sequential Gauss--Seidel achieves. Similarly, convergence acceleration techniques such as consensus-based warm starting or adaptive step-size strategies within the ADMM-GS decomposition could extend the distributed scheme's feasible range closer to the centralized limit.

    Finally, all results were obtained through simulation. Transferring the adaptive MPC framework to physical hardware introduces challenges including communication latency, state estimation uncertainty, and actuator limitations that may affect the feasibility bounds and stability guarantees established here.

    \subsection{Conclusion}

    This work demonstrates that velocity-dependent adaptive safety zones significantly enhance the operational capacity of MPC-based multi-drone systems in restricted airspace. The theoretical framework provides explicit, verifiable conditions for stability and feasibility under centralized and distributed control. Experimental results validate the predicted capacity increase: the centralized adaptive variant handles $1.8\times$ more drones at the throughput-optimal operating point, and up to $2\times$ more in practice. Though limited by higher computational cost, the distributed variant maintains the same mathematical safety guarantees. Throughput experiments confirm that adaptive spheres enable operations in constrained geometries infeasible with static radii. The identified trade-offs between computational overhead and airspace utilization motivate further research on trajectory prediction, hybrid control strategies, and game-theoretic coordination to bring adaptive safety zones closer to real-time deployment in multi-fleet drone operations.


    \section*{Acknowledgment}
    \input{extras/siddafunding}

    \ifCLASSOPTIONcaptionsoff
    \newpage
    \fi

    \bibliographystyle{IEEEtran}
    \bibliography{bib/bib}

    \vfill

\end{document}

%% file: extras/ntp.tex
\smallskip\noindent\textbf{Note to Practitioners:} This work provides actionable guidelines for deploying multi-drone systems in constrained environments such as warehouses or urban corridors. Given a volume~$V$, the minimum safety radius~$r_{\min}$, and the adaptation parameter~$\alpha$, the derived formulas yield the maximum drone count and the throughput-optimal operating point. For single-operator fleets, centralized adaptive MPC handles up to $1.8\times$ more drones than fixed-radius schemes; for mixed fleets with non-cooperative agents, the distributed variant optimizes locally without inter-fleet communication while maintaining identical safety guarantees. A practical hybrid strategy activates adaptive radii only when local density exceeds a threshold, preserving computational efficiency in uncongested airspace. The sufficient conditions on~$\alpha$ and drone density serve as runtime checks for operational stability.

%% file: extras/siddafunding.tex
\section*{Funding}

The research project was funded by ``The Ministry of the Environment, Nature Conservation and Transport of the State of North Rhine-Westphalia'' in Germany and co-financed by the European Union for the research project ``SIDDA - Sustainable Intermodal Drone Delivery Airline'' with grant number IN-ML-1-013b.